\newtheorem{theorem}{Theorem}[section]
\newtheorem{corollary}[theorem]{Corollary}
\newtheorem{remark}[theorem]{Remark}
\crefname{cor}{Corollary}{Corollaries}
\theoremstyle{definition}
\theoremstyle{remark}
\newcommand{\prob}[5]{%
  \begingroup
  \par\medskip
  \noindent \textsc{#1}\nopagebreak[4]
  \par\noindent\hangindent=\parindent\textit{#2}  #3
  \par\noindent\hangindent=\parindent\textit{#4}  #5
  \par  \medskip
  \endgroup
}
\newcommand{\decprob}[3]{\prob{#1}{Input:}{#2}{Question:}{#3}}
\DeclareMathOperator{\vc}{vc}
\DeclareMathOperator{\tw}{tw}
\DeclareMathOperator{\fvs}{fvs}
\DeclareMathOperator{\td}{td}
\begin{document}

\title{On the Parameterized Complexity of Sparsest cut and small-set expansion problems}

\author{Ramin Javadi \and Amir Nikabadi}

\address{Department of Mathematical Sciences, Isfahan University of Technology, P.O. Box 84156-83111, Isfahan, Iran.
School of Mathematics, Institute for Research in Fundamental Sciences (IPM), P.O. Box: 19395-5746, Tehran, Iran.}\email{rjavadi@iut.ac.ir}

\address{Universit\'e Paris-Dauphine, Universit\'e PSL, CNRS, LAMSADE, 75016, Paris, France}\email{amir.nikabadi@dauphine.eu}

\begin{abstract}



 

We present a parameterized dichotomy for the \textsc{$k$-Sparsest Cut} problem in weighted and unweighted versions. In particular, we show that
the weighted \textsc{$k$-Sparsest Cut} problem
is NP-hard for every $k\geq 3$ even on graphs with bounded vertex cover number. Also, the unweighted \textsc{$k$-Sparsest Cut} problem is W[1]-hard when parameterized by the three combined parameters tree-depth, feedback vertex set number, and $k$. 
On the positive side, we show that
unweighted \textsc{$k$-Sparsest Cut} problem is FPT when parameterized by the vertex cover number and $k$ and when $k$ is fixed, it is FPT with respect to the treewidth. Moreover, we show that the generalized version \textsc{$k$-Small-Set Expansion} problem is FPT when parameterized by $k$ and the maximum degree of the graph, though it is W[1]-hard for each of these parameters separately.
\end{abstract}

\maketitle
\vspace{-1cm}
\section{Introduction}
All graphs in this paper are finite and simple.  We denote the set of positive integers by $\mathbb{N}$, and for every integer $k\in \mathbb{N}$, we denote by $[k]$ the set of all positive integers which are smaller than or equal to $k$. Let $G=(V,E)$ be a graph. For every $X$ which is either a vertex or a subset of vertices of $G$, we write $G\setminus X$ for the graph obtained from $G$ by removing $X$.
In this work, we study the problem of finding a $k$-partition of the vertices of a graph where each part has a low edge expansion. More precisely, let $G=(V,E)$ be a graph endowed with an edge weight function $w:E\to \mathbb{R}^+$ and let $S\subseteq V$ be a subset of vertices. \textit{The edge expansion} of $S$ is defined as $\phi_{G}(S) = \frac{w(S, \overline{S})}{|S|},$ where $w(S,\overline{S})$ stands for the sum of the weights of all edges in $G$ with exactly one endpoint in $S$.
We drop the subscript $G$ when there is no ambiguity. The \textsc{Sparsest Cut Problem}  asks for a subset $S \subseteq V$ with at most $|V|/2$ vertices which has the least edge expansion. Let
\[\phi(G) = \min_{\stackrel{S\subseteq V}{|S|\leq |V|/2}} \phi_G(S)=  \min_{\stackrel{S\subseteq V}{S\neq \emptyset, S\neq V}} \frac{w(S, \overline{S})}{\min\{|S|,|\overline{S}|\}}. \] 
\decprob{\textsc{Sparsest Cut}}{A graph $G=(V,E)$, a weight function $w:E\to \mathbb{R}^+$ and a rational number $N$.}{Does there exist a subset $S\subseteq V$, where $|S|\leq |V|/2$ and $\phi_G(S)\leq N$? }
The \textit{unweighted} version of the problem is when all edge weights are equal to one. A natural generalization of \textsc{Sparsest Cut} problem is to find a $k$-partition of $V$ such that the worst edge expansion of the parts is minimized. More precisely, let $k\geq 2$ be an integer and $\mathcal{S}=\{S_1,\ldots, S_k \}$ be a partition of $V$ into $k$ subsets. Define,
\[ \phi_k(G)=\min_{\mathcal{S}=\{S_1,\ldots, S_k\}} \max_{1\leq i\leq k} \phi_{G}(S_{i}), \] where the minimum is taken over all $k$-partitions of $V$. This generalization is called \textsc{$k$-Sparsest Cut}. One may see that \textsc{Sparsest Cut} problem is the special case when $k=2$. 

\decprob{\textsc{$k$-Sparsest Cut} ($k$SC)}{A graph $G=(V,E)$, a nonnegative integer $k$ and a rational number $N$.}{Does there exist a $k$-partition $\{S_1,\ldots, S_k\}$ of $V$ where the edge expansion of each part is at most $N$, i.e. for every $i\in [k]$, $\phi_G(S_i)\leq N$?}%

We refer to another generalization of \textsc{Sparsest Cut} problem when we restrict the search space into small subsets of $V(G)$. Let $k$ be a positive integer. The \textsc{Small-Set Expansion} problem seeks for a subset $S\subseteq V$ of size at most $k$ with the minimum edge expansion. We define,
\[\psi_k(G)=\min_{\stackrel{S\subseteq V(G)}{|S|\leq k}} \phi_G(S). \]
\decprob{\textsc{$k$-Small-Set Expansion} ($k$SSE)}{A graph $G=(V,E)$, a positive integer $k$ and a rational number $N$.}{Does there exist a subset $S\subseteq V$ of size at most $k$ such that $\phi_G(S)\leq N$?}%

If $\{S_1,\ldots,S_k \}$ is a $k$-partition of $V$, then there is some $i\in [k]$, where $|S_i|\leq |V|/k$. Therefore, for every integer $k\geq 2$, we have 
\begin{equation} \label{eq:phipsi}
 \psi_{|V|/k}(G)\leq \phi_k(G). 
\end{equation}
It is readily observed that:
\begin{equation} \label{eq:phi2}
\phi_2(G)= \min_{S\subsetneq V(G)} \max(\phi_G(S),\phi_G(\overline{S})\}
= \min_{\stackrel{S\subsetneq V(G)}{|S|\leq |V|/2}} \phi_G(S)= \psi_{|V|/2}(G).
\end{equation}
Thus, equality holds in \eqref{eq:phipsi} when $k=2$. We refer the reader to the two classic results regarding the \textsc{Sparsest Cut} problem, Leighton and Rao's $O(\log n)$ approximation algorithm \cite{leighton1989approximate}, and Arora, Rao, and Vazirani's $O(\sqrt{\log n})$ approximation algorithm \cite{arora2009expander}. Daneshgar et al. \cite{daneshgar2012complexity,daneshgar2013clustering} showed that $k$\textsc{SC} is NP-hard even for trees and gave an $O(n \log n)$ time algorithm for weighted trees when the search space is relaxed to all $k$-subpartitions ($k$ disjoint subsets of vertices). 
 
\subsection*{Our contribution:}
We study the parameterized complexity of $k$-\textsc{Sparsest Cut} problem and $k$-\textsc{Small-Set Expansion} problem where we focus on graphs of bounded treewidth, bounded vertex cover number, bounded degree, and degeneracy. We divide the investigation into weighted and unweighted graphs. The problem $k$SC shows different complexity behavior in weighted and unweighted versions. \Cref{tbl:1} gives an overview of our obtained results. Also, \Cref{fig:results} shows the relationship between the parameters.
We begin by presenting our results for weighted versions of $k$-SC and $k$SSE in \Cref{sec3}. We prove that $k$SSE and $2$SC are FPT with respect to the treewidth and $k$SC, for every fixed $k\geq 3$, is NP-hard even on graphs with bounded vertex cover number $\vc$. We also prove the NP-hardness of $k$SC and $k$SSE for every fixed $k\geq 2$, even for graphs of bounded maximum degree and bounded degeneracy. In \Cref{sec4}, we investigate the unweighted version of $k$SC and we prove that for every fixed $k\geq 2$, the problem $k$SC, for every fixed $k$, is FPT with respect to the treewidth $\tw$. Although in the running time of the algorithm, $k$ is in the exponent, we prove that it is unlikely to improve it to $f(k,\tw)\, n^{O(1)}$ by showing that unweighted $k$SC is W[1]-hard w.r.t. the parameters $k$ and treewidth, combined (and even more, w.r.t. $k$, feedback vertex set number and tree-depth). Finally, we prove that unweighted $k$SC happens to be FPT w.r.t. the parameters $k$ and the vertex cover number. \Cref{sec5} begins with proving W[1]-hardness of $k$SSE for the parameter $k$. 
Also, based on the random separation method of Cai et al. \cite{cai2006random}, we provide an FPT algorithm for $k$SSE w.r.t. $k$ and the maximum degree of the input graph, combined.
\vspace{-0.1cm}

\begin{table}[H]
	\centering\footnotesize
 \caption{\label{tbl:1}Overview of known and new parameterized results for weighted$\backslash$unweighted $k$SC and $k$SSE. Herein, $\Delta, d, \vc$, tw, fvs, td denotes the maximum degree, the degeneracy, the vertex cover number, the treewidth, the feedback vertex set number, and the tree-depth of the input graph, respectively.}
	\begin{tabular}{p{.09\textwidth} p{.1\textwidth} p{.44\textwidth}}     
		\toprule
		Problem & Parameter & Results \\%
	
		\midrule
		Weighted $k$SC
		& $\vc$ &
		FPT for $k=2$ (\Cref{thm:vck=2})\newline
		NP-hard for every $k\geq 3$ and $\vc\geq 3$ (\Cref{thm:NPtau3}) \\
		\cmidrule{2-3}
		& tw &
		FPT for $k=2$ (\Cref{thm:twk=2})\newline
		NP-hard for every $k\geq 3$ and $\tw\geq 3$ (\Cref{cor:NPtw3})\newline
		Polynomial for every $k\geq 2$ and $\tw=1$ (\hspace{-0.01mm}\cite{daneshgar2012complexity}) \\
		\cmidrule{2-3}
		& $\Delta$&
		NP-hard for every $k\geq 2$ and $\Delta \geq 3$ (\Cref{thm:2})\\
		\cmidrule{2-3}
		& $d$ &
		NP-hard for every $k\geq 2$ and $d \geq 2$  (\Cref{thm:2})\\
		\midrule
		{Unweighted $k$SC}&
		($k, \fvs, \td$)&
		W[1]-hard (\Cref{thm:w1})\\
		\cmidrule{2-3}
		&($k,\tw$) &
		W[1]-hard (\Cref{cor:tw})\\
  	\cmidrule{2-3}
		&$(k,\vc)$ &
		FPT (\Cref{thm:vc})\\
		\cmidrule{2-3}
		& tw &
		FPT for fixed $k\geq 2$ (\Cref{thm:tw})\\
	
		\midrule
		{$k$SSE}&
		$k$ &
		W[1]-hard even for unweighted  (\Cref{thm:kssew1k})\\
		\cmidrule{2-3}
		&$\Delta$ &
		NP-hard for every $\Delta \geq 3$ (\Cref{cor:delta})\\
		\cmidrule{2-3}
		
		&($k, \Delta$)&
		FPT (\Cref{thm:randomsep})\\
		\cmidrule{2-3}
		& tw &
		FPT  (\Cref{thm:twk=2})\\
		\cmidrule{2-3}
		&$\vc$ &
		FPT  (\Cref{thm:vck=2})\\
		\bottomrule
	\end{tabular}
	\label{table:results}
\end{table}

\vspace{-1cm}

\begin{figure}[htb]
 \centering

 \begin{tikzpicture}[thick,scale=1.5, every node/.style={scale=1.1}]
 \node[] (a) at (0,0) {$\vc+k$};
 \node[] (b) at (0,1) {$\td+k$};
 \node[] (c) at (0,2) {$\text{pw}+k$};
 \node[] (d) at (0,3) {$\tw+k$};
 \node[] (e) at (-2,1.7) {$\fvs+k$};
 
   %
 \path[->, very thick] (a) edge node {} (b);
 \path[-> , very thick] (b) edge node {} (c);
 \path[-> , very thick] (c) edge node {} (d);
 \path[-> , very thick] (a) edge node {} (e);
 \path[-> , very thick] (e) edge node {} (d);
 
%
  \node[] (c1) at (-2,0.5) {};
   \node[] (c2) at (2,0.5) {};
\path[very thick, dashed, red] (c1) edge [in=180, out=30] (c2);

 \node[] at (3.1,0.5) {\textsc{Unweighted $k$SC}};

\node[] (b1) at (-1.5,3.2) {};
\node[] (b2) at (1.5,3.2) {};
\path[very thick, dashed, blue] (b1) edge [in=170, out=0] (b2);

 \node[] at (0,3.8) {\textsc{Weighted 2SC, $k$SSE,}};
  \node[] at (0,3.5) {\textsc{Unweighted $k$SC} (fixed $k\geq 2$)};

\node[] (c1) at (-2,0) {};
\node[] (c2) at (2,0) {};
\path[very thick, dashed, red] (c1) edge [in=-150, out=-30] (c2);

 \node[] at (3.7,0) {\textsc{Weighted $k$SC} (fixed $k\geq 3$)};
 
 \end{tikzpicture}
  \caption{\label{fig:results}Relationship and summary of the results for considered 
 parameters. Arrows indicate more general parameters, that is, an arrow $p\rightarrow q$ between two parameters $p,q$ shows that there exists a computable function $g$ such that $q \leq g(p)$. Red dotted lines show the tractability borders (FPT) for the parameter.}
 \end{figure}


\section{Preliminaries}\label{sec:prelim}
We use standard graph-theoretic notation, see\cite{bondy1976graph}. We denote the open neighborhood of a vertex $v$ in $G$ by $N_{G}(v)$. 
The size of $N_G(v)$ is called the degree of $v$ and the maximum degree of all vertices is denoted by $\Delta$. Given a subset $S \subseteq V(G)$, the subgraph of $G$ induced by $S$ is denoted by $G[S]$. A graph $G$ is called $d$-\textit{degenerate} if every induced subgraph of $G$ has a vertex of degree at most $d$. The minimum number $d$ for which $G$ is $d$-generate is called the \textit{degeneracy} of $G$. We assume the reader is familiar with basic notions in parameterized complexity, such as the FPT and W[1] classes, see  \cite{cygan2015parameterized}. We rely on a number of well-known graph structural parameters: treewidth \cite{cygan2015parameterized}, tree-depth \cite{nevsetvril2006tree}, feedback vertex set, and vertex cover denoted respectively as $\tw(G)$, $\td(G)$, $\fvs(G)$, $\vc(G)$, where we drop $G$ if it is clear from the context. The Exponential Time Hypothesis (ETH)   \cite{impagliazzo2001problems} states that 3-SAT
 on instances with $n$ variables and $m$ clauses cannot be solved in time $2^{o(n+m)}$.

\section{Weighted version}\label{sec3}
In this section, we show that $k$SSE and $2$SC becomes FPT when parameterized by the treewidth (\Cref{thm:twk=2}), via a treewidth-based dynamic programming algorithm. Moreover, in \Cref{cor:NPtw3} we will show that this result cannot be extended to $k$SC for $k\geq 3$.

\begin{theorem}\label{thm:twk=2}
The problem $k$SSE, for every $k$, and $2$SC parameterized by the treewidth are fixed-parameter tractable. Also, if the input graph $G$ has $n$ vertices and its tree decomposition of width $\tw$ is given, then the algorithm runs in $O(2^{\tw}k^2n)$ time for $k$SSE and in  $O(2^{\tw}n^3)$ for $2$SC and uses exponential space to $\tw$.
\end{theorem}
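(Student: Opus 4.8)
The plan is to handle both problems with a single dynamic program over a nice tree decomposition, after reducing them to a common shape. First I would observe that, for a set $S$ of size $i$, the condition $\phi_G(S)\le N$ is equivalent to $w(E(S,\overline S))\le N\cdot i$, so it suffices to compute, for every size $i$, the quantity $\mu(i)\defeq \min_{|S|=i} w(E(S,\overline S))$. Then $k$SSE is a yes-instance iff $\mu(i)\le Ni$ for some $1\le i\le k$. For $2$SC I would invoke \eqref{eq:phi2}: since $w(E(S,\overline S))$ depends only on the cut and $\max(\phi_G(S),\phi_G(\overline S))$ equals the expansion of the smaller side, $2$SC asks exactly whether $\mu(i)\le Ni$ for some $1\le i\le \floor{|V|/2}$ (with $\overline S$ then automatically nonempty). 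Thus both problems are answered once the array $(\mu(i))_i$ is known, and the only difference is the range of $i$ (up to $k$ versus up to $\floor{n/2}$), which already explains the two running times.

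Next I would compute all $\mu(i)$ by a bottom-up dynamic program on a nice tree decomposition, obtained in FPT time and made nice via \cite{kloks1994vol}. For a node $t$ let $V_t$ be the union of the bags in the subtree rooted at $t$ and let $E_t$ be the set of edges already introduced there. The table entry $c_t[Y,i]$, for $Y\subseteq X_t$ and an integer $i$, stores the minimum total weight of $E_t$-edges having exactly one endpoint in $S$, over all $S\subseteq V_t$ with $S\cap X_t=Y$ and $|S\cap V_t|=i$. The transitions are the standard ones: a leaf has $c_t[\emptyset,0]=0$; introducing a vertex copies the table and adjusts $i$ according to whether the new vertex lies in $Y$; introducing an edge $uv$ adds $w(uv)$ precisely to those entries in which exactly one of $u,v$ lies in $Y$; a forget node takes the entrywise minimum over the two membership choices of the forgotten vertex; and a join node combines the two children.

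The join node is where the explicit size coordinate makes the computation nontrivial, and it is the step I expect to be the main obstacle. Since $V_{t_1}\cap V_{t_2}=X_t$, a set of size $i$ at $t$ splits into sets of sizes $i_1,i_2$ in the two subtrees with $i_1+i_2=i+|Y|$, and because each edge is introduced in exactly one subtree the cut weights simply add; hence
\[ c_t[Y,i] \;=\; \min_{i_1+i_2 = i+|Y|}\bigl(c_{t_1}[Y,i_1] + c_{t_2}[Y,i_2]\bigr), \]
which is a min-plus convolution over the size index. For a fixed $Y$ this convolution costs $\Oh(k^2)$ when $i$ is capped at $k$ (for $k$SSE) and $\Oh(n^2)$ when $i$ is capped at $\floor{n/2}$ (for $2$SC); summing over the $\Oh(2^{tw})$ choices of $Y$ and the $\Oh(n)$ nodes yields the claimed bounds $\Oh(2^{tw}nk^2)$ and $\Oh(2^{tw}n^3)$. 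The table has size $\Oh(2^{tw}nk)$, which accounts for the exponential space in $tw$. Reading $\mu(i)=c_r[\emptyset,i]$ at the empty-bag root and scanning the relevant range of $i$ finishes the decision.

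Finally I would remark why this tailored dynamic program is needed: the objective is a ratio that couples the cut weight with the exact cardinality $|S|$, which resists a clean MSO$_2$ formulation, so Courcelle's theorem does not apply directly and the explicit size coordinate $i$ is essential. Correctness of the transitions follows by the usual induction on the tree decomposition, the only delicate point being that each edge must contribute to the cut exactly once, which is guaranteed by charging every edge to the introduce-edge node at which it appears.
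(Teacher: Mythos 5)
Your proposal is correct and follows essentially the same route as the paper: both reduce $2$SC to the $k$SSE computation via \eqref{eq:phi2} and run a bottom-up dynamic program over a nice tree decomposition whose table is indexed by a subset of the bag together with the size of the partial solution, with the join node handled by a convolution over the size coordinate, yielding the same $O(2^{tw}nk^2)$ and $O(2^{tw}n^3)$ bounds. The only cosmetic differences are that you store minimum cut weights rather than the expansion ratios $A_i[C,s]$ (equivalent, since the size index is fixed) and that you use introduce-edge nodes so the join needs no correction, whereas the paper subtracts $w(C,X_i\setminus C)$ at join nodes to undo the double counting of bag edges.
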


\begin{proof}
First, note that due to Equation \eqref{eq:phi2}, $2$SC is a special case of $k$SSE for $k=n/2$. So, we only prove it for $k$SSE. The proof is based on dynamic programming which computes the values of a table on the nodes of the tree decomposition of the graph in a bottom-up fashion. For convenience and easier analysis, we use a nice tree decomposition.

Suppose that $\tw(G)=\tw$ and consider a nice tree decomposition $ (T, \{ X_{t} \}_{t\in V(T )} ) $ for $G$ of width $\tw$ as defined in Section~\ref{sec:prelim}. For each node $i\in V(T)$, let $T_{i}$ be the subtree of $T$ rooted at $i$ and $G_{i} = (V_{i}, E_{i})$ be the subgraph of $G$ induced by the vertices in $\bigcup_{j \in V(T_{i})} X_{j}$. 
For each node $i\in V(T)$, every integer $1\leq s\leq k$ and every subset $C\subseteq X_i$, define 
\[A_i[C,s]=\min_{\stackrel{S\subset V_i}{\stackrel{|S|=s}{S\cap X_i=C}}} \phi_{G_i}(S). \]
The value of $A_i[C,s]$ is defined to  be $+ \infty$ when there is no feasible solution.
Now, we comprise a table where each row represents a node of $T$ (from leaves to the root), and each column represents an integer $s$, $1\leq s\leq k$ and a subset $C\subseteq X_i$. The value of row $i$ and column $(C,s)$ is equal to $A_i[C,s]$. The algorithm examines the nodes of $T$ in a bottom-up manner and fills in the table by the following recursions.

\subsection*{Leaf nodes.}
In the initialization step, for each leaf $i\in V(T)$, we have $V_i=X_i=\emptyset$. Therefore,

\[A_i[\emptyset,s]= +\infty, \quad \forall \  1\leq s\leq k
\]
\subsection*{Forget nodes.}
Let $i$ be a forget node with a child $j$, where $X_i=X_j\setminus \{v\}$ and $v\in X_j$. Then, the vertex $v$ is either inside or outside the solution $S$. Therefore,
\[A_i[C,s]=\min\{A_j[C,s], A_j[C\cup\{v\},s] \}. \]
\subsection*{Introduce nodes.}
Let $i$ be an introduce node with a child $j$, where $X_i=X_j\cup\{v\}$ and $v\notin X_j$. For integer $1\leq s\leq k$ and subset $C\subseteq X_i$, we have
\[A_i[C,s]=\begin{cases}
A_j[C,s]+\dfrac{w(\{v\},C)}{s} & \text{if }v\notin C, \\
\dfrac{s-1}{s} A_j[s-1,C\setminus \{v\}]+\dfrac{w(\{v\},X_i\setminus C)}{s} & \text{if }v\in C. 
\end{cases} \]

\subsection*{Join nodes.}
Let $i$ be a join node with children $j$ and $k$, where $X_i=X_j=X_k$. Then,
\[A_i[C,s]= \min_{s_1+s_2=s+|C|} \frac{s_1A_j[C,s_1]+s_2A_j[C,s_2]-w(C,X_i\setminus C)}{s}.
\]
Finally, the solution to $k$SSE for $G$ is equal to $\min_{C\subseteq X_r, 1\leq s\leq k} A_r[C,s]$, where $r$ is the root of $T$. Since the size of the table is at most $n\times k2^{\tw}$ and the value for each join node is computed in $O(k)$, the runtime of the whole algorithm is at most $2^{\tw}O(nk^2)$. Also, for a graph $G$ of treewidth $\tw$, a tree decomposition of width $\tw$ can be found in time $\tw^{O(\tw^3)}O(n)$ \cite{bodlaender1996linear}. Hence, the problem $k$SSE parameterized by the treewidth is FPT.
\end{proof}

Since the treewidth of the graph is bounded by its vertex cover number, \Cref{thm:twk=2} implies that $k$SSE and $2$SC with the parameter vertex cover are both in FPT. However, the algorithm uses an exponential space. In the following theorem, when the size of the vertex cover is bounded,  we give an alternative algorithm whose runtime is better and uses polynomial space. 

\begin{theorem} \label{thm:vck=2}
The problems $k$SSE, for every $k$, and $2$SC can be solved in respectively $O(2^{\vc}kn \log n)$ and $O(2^{\vc}n^2 \log n)$ time and uses polynomial space, where $n$ and $\vc$ are respectively the order and the vertex cover number of the input graph.
\end{theorem}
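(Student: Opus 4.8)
The plan is to exploit the structure that a vertex cover imposes. Fix a minimum vertex cover $C$ with $|C|=\tau$, so that $I \defeq V\setminus C$ is an independent set and every edge of $G$ has at least one endpoint in $C$. Since $2$SC equals $k$SSE with $k=\floor{n/2}$ by Equation \eqref{eq:phi2}, it suffices to design the algorithm for $k$SSE; the $2$SC bound then follows by substituting $k=\floor{n/2}$. For $k$SSE I would guess the intersection $C_S \defeq S\cap C$ of the unknown optimal solution $S$ with the cover, of which there are only $2^{\tau}$ possibilities, iterate over all of them, and keep the best objective found.

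Fix such a guess $C_S$. Every edge either lies inside $C$ or joins $I$ to $C$, and the cut edges inside $C$ are determined entirely by $C_S$; call their total weight $W_C(C_S)$. For $v\in I$ write $a_v$ for the weight of edges from $v$ to $C_S$ and $b_v$ for the weight of edges from $v$ to $C\setminus C_S$. If $v$ is placed in $S$ it contributes $b_v$ to the cut, and if it is left outside it contributes $a_v$, so
\[
w(E(S,\overline{S})) = W_C(C_S) + \sum_{v\in I} a_v + \sum_{v\in S\cap I}(b_v - a_v).
\]
The first two terms are constant once $C_S$ is fixed, so each vertex of $I$ has an independent marginal cost $b_v-a_v$ for being included. \textbf{The main obstacle} is that the objective $\phi_G(S)=w(E(S,\overline{S}))/|S|$ is a ratio, so one cannot simply greedily minimize the numerator. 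I would remove this by fixing the denominator: for each target size $m$ with $1\le m\le k$, set $j=m-|C_S|$ and, whenever $0\le j\le|I|$, minimize the numerator over all $S$ with $|S|=m$ and $S\cap C=C_S$. Because the numerator is then a constant plus a sum of $j$ marginal costs, its minimum is attained by choosing the $j$ vertices of $I$ with smallest $b_v-a_v$. Hence sorting $I$ by $b_v-a_v$ and taking prefix sums yields, for all relevant sizes at once, the best ratio achievable under the guess $C_S$.

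Taking the minimum of these ratios over all $2^{\tau}$ guesses $C_S$ and all sizes $m\le k$ gives $\psi_k(G)$, which one compares against $N$ to decide the instance. For the running time, each guess requires computing the $a_v,b_v$ and $W_C(C_S)$, an $O(n\log n)$ sort of $I$, and $O(k)$ prefix-sum evaluations of candidate sizes, which over all guesses is within the stated $O(2^{\tau}kn\log n)$ bound; substituting $k=\floor{n/2}$ yields $O(2^{\tau}n^2\log n)$ for $2$SC. Crucially, the guesses are processed one at a time and each needs only a sorted list and a prefix-sum array, so the whole computation runs in polynomial space, in contrast with the exponential-space dynamic program of \Cref{thm:twk=2}. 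The two points needing care in the full proof are verifying the cut decomposition above (in particular that an excluded $I$-vertex contributes exactly $a_v$, while an included one contributes $b_v$) and confirming that fixing the size $m$ and then greedily minimizing the numerator indeed recovers the global optimum over all sizes.
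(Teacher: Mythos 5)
Your proposal is correct and follows essentially the same route as the paper's own proof: guess the intersection $\tilde{C}=S\cap C$ over all $2^{\tau}$ subsets of the cover, fix the size $s$ of $S$ to neutralize the ratio objective, and observe that the optimal completion takes the $s-|\tilde{C}|$ vertices of $I$ with smallest marginal cost $w'_v-w_v$ (your $b_v-a_v$), found by one sort per guess, with the final answer being the minimum over all guesses and sizes. If anything, your write-up is slightly more careful than the paper's, since you keep the constant within-cover cut term $W_C(C_S)=w(E(C_S,C\setminus C_S))$ that the paper's displayed formula for $\phi(S)$ omits, and your prefix-sum reuse of a single sort across all sizes even tightens the per-guess cost, while remaining within the stated time and space bounds.
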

\begin{proof}
As in the proof of Theorem~\ref{thm:twk=2}, we only prove it for $k$SSE. Let $C$ be a vertex cover of the graph $G$ with size $|C|=\vc$ and define $I=V(G)\setminus C$. Fix a subset $\tilde{C}\subseteq C$ and an integer $1\leq s\leq k$ and define 
\begin{equation} \label{eq:11a}
\phi[\tilde{C},s]= \min_{\stackrel{S\subsetneq V(G)}{|S|=s,\ S\cap C=\tilde{C}}} \phi(S).
\end{equation}

When there is no feasible solution, define $\phi[\tilde{C},s]$ to be $+\infty$. 

We show that for fixed $\tilde{C},s$, the value of $\phi[\tilde{C},s]$ can be found in the time polynomial in $n$. For every vertex $i\in I$, define $w_i=w(\{i\},\tilde{C})$ and $w'_i=w(\{i\},C\setminus \tilde{C})$. Also, define $W=w(\tilde{C},\overline{\tilde{C}})$. Now, let $S\subseteq V(G)$ be such that $|S|=s$ and $S\cap C=\tilde{C}$. Then,
\begin{align} 
\phi(S)&= \frac{\sum_{i\in S\cap I} w'_i+ \sum_{i\in I\setminus S}w_i+w(\tilde{C},C\setminus \tilde{C})}{s}\notag \\
&=
 \frac{\sum_{i\in S\cap I} w'_i+W-\sum_{i\in S\cap I} w_i}{s}=\frac{W}{s}+\frac{1}{s} \sum_{i\in S\cap I} (w'_i-w_i). \label{eq:11b}
\end{align}

Now, sort the vertices in $I$ by the values $w''_i=w'_i-w_i$, $i\in I$, in nondecreasing order. Without loss of generality, suppose that $I=\{1,2,\ldots ,n-\vc\}$ and $w''_1\leq w''_2\leq \cdots w''_{n-\vc}$. If $s<|\tilde{C}|$, then clearly, $\phi[\tilde{C},s]=+\infty$. So, suppose that $s\geq |\tilde{C}|$ and define $s'=s-|\tilde{C}|$ and $\tilde{S}=\tilde{C}\cup \{1,2,\ldots, s'\}$. Hence, by \eqref{eq:11a} and \eqref{eq:11b}, we have 
$\phi[\tilde{C},s]= \phi(\tilde{S})$.

Finally, we have 
\[\psi_k(G)= \min_{\stackrel{\tilde{C}\subseteq C}{1\leq s\leq k}} \phi[\tilde{C},s]. \]

The computation of $\phi[\tilde{C},s]$ takes $O(n\log n)$ time and polynomial space using a simple sorting algorithm. Thus, the algorithm runs in $O(2^{\vc} kn\log n)$ time and uses polynomial space. 
\end{proof}

One may naturally ask if \Cref{thm:twk=2,thm:vck=2} can be generalized for the problem $k$SC when $k\geq 3$. In the following theorem, we show that the answer is negative. 
\begin{theorem}\label{thm:NPtau3}
For every fixed integer $k\geq 3$ and $\vc\geq 3$, the problem $k$SC is NP-hard for graphs with vertex cover number at most $\vc$.
\end{theorem}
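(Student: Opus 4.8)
The plan is to give a polynomial reduction from a (weakly) NP-hard number‑partitioning problem, exploiting the fact that the weights may be arbitrary rationals: I reduce from \emph{balanced $3$-partition}, i.e.\ given $3m$ positive integers $b_1,\dots,b_{3m}$ with total $\sigma=\sum_j b_j$ (assume $3\mid\sigma$, set $B=\sigma/3$), decide whether they can be split into three groups each of cardinality $m$ and sum exactly $B$. This is NP-hard (by a standard padding reduction from \textsc{Partition}). The target graph will have a vertex cover of size $3$, so it automatically has vertex cover at most $\tau$ for every $\tau\ge3$; the extra parts needed when $k>3$ will be absorbed by a handful of isolated vertices, which do not affect the vertex cover. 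Thus a single template handles all fixed $k\ge3$ and $\tau\ge3$.

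For the gadget I take three \emph{hub} vertices $h_1,h_2,h_3$ (the vertex cover), an independent \emph{item} set $\{x_1,\dots,x_{3m}\}$, and $k-3$ isolated vertices. Each item $x_j$ is joined to all three hubs by edges of weight $w_j=L+b_j$, where $L$ is a large integer (taking $L$ larger than $\sigma$ will suffice). The key computation is that if $S_i$ is a part containing exactly hub $h_i$ together with a set $G_i$ of items, then every item inside $S_i$ contributes its two edges to the other hubs while every item outside contributes only its single edge to $h_i$, so that the cut weight is $w(E(S_i,\overline{S_i}))=W+\sum_{j\in G_i}w_j$, where $W=\sum_j w_j$. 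Consequently $\phi_G(S_i)=\frac{W+|G_i|\,L+\beta_i}{|G_i|+1}$ with $\beta_i=\sum_{j\in G_i}b_j$, so the numerator genuinely depends on the load of the part. This is exactly where $\tau\ge 3$ is essential: with only two hubs the same computation gives a cut weight of $W$ independent of the split, carrying no information about the partition of the weights (which is also consistent with $2$SC being tractable). I then set the threshold to the value attained by the perfectly balanced solution, $N=\frac{4mL+\sigma+B}{m+1}$.

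The correctness argument has two layers. First I pin down the shape of any good $k$-partition: the $k-3$ isolated vertices must each form their own singleton part (expansion $0$), for otherwise some extra part would contain items and hence have huge expansion; similarly each main part must contain exactly one hub (a hubless part full of heavy items, or one missing its isolated filler, blows up the maximum), so all items land in the three hub-parts. Second, restricting to this shape, I analyze $f(m_i,\beta_i)=\frac{W+m_iL+\beta_i}{m_i+1}$: a short monotonicity (exchange/convexity) argument shows $f$ is decreasing in $m_i$, so the part with the fewest items has the largest expansion, and choosing $L>\sigma$ guarantees that any unequal cardinality split already forces some $\phi_G(S_i)>N$. Once all cardinalities equal $m$, $f$ is increasing in $\beta_i$, so the maximum is minimized, and equals $N$, precisely when all three sums $\beta_i=B$. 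Hence $\phi_k(G)\le N$ if and only if the balanced $3$-partition instance is a yes-instance.

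The main obstacle, and the technical heart of the proof, is the denominator $|S_i|$ in the definition of edge expansion: expansion divides by the \emph{number of vertices}, not by weight, so the natural ``balance the weights'' problem is distorted into ``balance weight-per-vertex,'' and one must prevent the optimizer from trading part size against cut weight. I resolve this by the two-scale weighting $w_j=L+b_j$, where the dominant term $L$ forces equal cardinalities (neutralizing the denominator) while the small term $b_j$ encodes the actual numbers to be balanced; the exchange argument above is what makes this decoupling rigorous while keeping the vertex cover constant and ruling out degenerate partitions. I expect getting this rigidity exactly right — choosing $L$ large enough, and verifying that no unbalanced configuration sneaks under the threshold $N$ — to be the most delicate part of the write-up.
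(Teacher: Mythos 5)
Your reduction breaks in the backward direction: the structural claim of your first layer --- that any partition with maximum expansion at most $N$ must place exactly one hub per non-trivial part because ``a hubless part full of heavy items \ldots blows up the maximum'' --- is false whenever $m\geq 4$. A hubless part $A$ consisting only of items has expansion $\frac{3\sum_{j\in A}w_j}{|A|}=3L+\frac{3\beta_A}{|A|}$, while your threshold has leading coefficient $\frac{4m}{m+1}$ in $L$, and $\frac{4m}{m+1}>3$ exactly when $m>3$; so for $m\geq 4$ hubless parts sit \emph{below} the threshold. This yields a concrete cheat already at $k=3$: take $m=10$ (so $30$ items) and the partition $S_1=\{h_1,h_2\}\cup G_1$, $S_2=\{h_3\}\cup G_2$, $S_3=G_3$ with $|G_1|=|G_2|=12$, $|G_3|=6$. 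Writing $W=30L+\sigma$, one gets $\phi(S_1)=\frac{2W-12L-\beta_1}{14}\leq\frac{48L+2\sigma}{14}$, $\phi(S_2)=\frac{W+12L+\beta_2}{13}\leq\frac{42L+2\sigma}{13}$, and $\phi(S_3)=3L+\frac{\beta_3}{6}\cdot 3\leq 3L+\frac{\sigma}{2}$, whereas $N\geq\frac{40L}{11}$. Since $\frac{48}{14},\frac{42}{13},3<\frac{40}{11}$, all three expansions are strictly below $N$ for every $L\geq\sigma$ --- \emph{regardless of the values $b_j$}. The same split (item blocks of sizes roughly $\frac{6m}{5},\frac{6m}{5},\frac{3m}{5}$) works for every $m\geq 4$. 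Hence every instance with $m\geq 4$ maps to a YES-instance of $k$SC, NO-instances are not preserved, and the reduction fails (instances with $m\leq 3$ have constant size, so this is fatal). Your layer-two exchange argument is fine but is conditional on the rigid shape, and that shape cannot be enforced with these weights: the total incident weight of an item is about $3L$, below the value $\approx\frac{4m}{m+1}L$ which your threshold must have for the forward direction, so fully cutting an item is never ``too expensive per vertex.''

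This is exactly the trap the paper's construction is engineered to avoid, via two gadgets your proposal lacks. First, the paper calibrates the item weights against the threshold itself, setting $\omega(v_iu_t)=w_i+N$; then any part containing an item but no hub has expansion strictly greater than $N$ irrespective of its size, so rigidity is automatic. (You cannot mimic this with $w_j=L+b_j$: the shift would need to be at least the threshold, but your threshold is necessarily about $4L$.) Second, each hub is inflated into a cluster of $M$ vertices glued by infinite-weight edges, and the $k-3$ surplus parts are absorbed by filler vertices whose singleton expansion is exactly $N$; the clusters force each of the at least three filler-free parts to contain exactly one hub cluster, and they make the denominator of every hub part approximately $M$, so the item-count term $N|I_j|$ cancels and $\phi(S_j)\leq N$ becomes precisely the weight condition $w(I_j)\leq B$. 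In other words, the paper decouples cardinality from weight by making denominators insensitive to how many items a part receives, whereas your two-scale trick tries to control the denominator through a cardinality constraint and thereby leaves the two-hub and hubless configurations unpunished. Repairing your argument would require importing both of these devices, at which point it essentially becomes the paper's proof.
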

\begin{proof}
Let $k\geq 3$ be a fixed integer. We are going to prove that $k$SC is NP-hard for graphs with vertex cover  numbers at most $3$. We give a polynomial reduction from \textsc{Partition} problem which is well-known to be NP-hard \cite{garey}.

\decprob{\textsc{Partition}}{Positive integers $w_{1}, . . . , w_{n}, B$, where $\sum_i w_i=2B$.}{Does there exist a subset $I\subseteq \{1,\ldots,n \}$ such that $\sum_{i\in I} w_i=B$?}

Let $w_1,\ldots, w_n$ be an instance of \textsc{Partition}. Let us define $w_{n+1}=B$.  We construct a graph $G$ with a vertex cover number equal to three and a number $N$ such that the answer to \textsc{Partition} is yes if and only if $\phi_k(G)\leq N$. Let $M$ be a fixed integer that will be determined shortly and define $G$ to be the bipartite graph with bipartition $(X,Y)$, where $X=\{u_1,u_2,u_3\}$ and 
$Y=V\cup \hat{Y}\cup U_1\cup U_2\cup U_3$, where 
$ V=\{v_1,\ldots, v_{n+1} \} $, $\hat{Y}=\{y_1,\ldots, y_{k-3}\}$ and $U_t=\{u^t_1,\ldots,u^t_{M-1} \}$, $t\in\{1,2,3\}$.
Also, all vertices in $V\cup \hat{Y}$ are adjacent to every vertex in $X$ and every vertex in $U_t$ is adjacent to $u_t$, $t=1,2,3$ (see Figure \ref{fig:cbg}). Moreover, define $N=4B/ (M-n-1-(k-3)/3)$ and choose $M$ such that $M> n+1+(k-3)/3$. The edge weights are defined as follows.

\begin{align*}
w(v_iu_t)&= w_i+N,\ 1\leq i\leq n+1,\ 1\leq t\leq 3 \\
w(y_{j}u_t) &= N/3,\ 1\leq j\leq k-3,\ 1\leq t\leq 3\\
w(u_l^tu_t) &= +\infty,\ 1\leq l\leq M-1,\ 1\leq t\leq 3. 
\end{align*}

Let $I\subset \{1,\ldots, n\}$ be such that $w(I)=\sum_{i\in I}w_i =B$. Also, let $V_I=\{v_i, i\in I \}$, $V_{\overline{I}}=\{v_i, i\not\in I \}$. 
Define, 
\begin{align*}
S_1&=\{u_1,u^1_l, 1\leq l\leq M-1\}\cup V_I,\\
S_2&=\{u_2,u^2_l, 1\leq l\leq M-1\}\cup V_{\overline{I}},\\
S_3&=\{u_3,u^3_l, 1\leq l\leq M-1\}\cup \{v_{n+1}\},\\
S_j&=\{y_{j-3}\}, \ 4\leq j\leq k.
\end{align*}
We leave to the reader to see $\phi(S_i)= N$, for every $i\in \{1,\ldots, k\}$. As an example of this, below we compute the $\phi(S_1)$.

\begin{align*}
\phi(S_1)&=
\frac{w(V_I,\{u_2,u_3\})+w(\{u_1\},\hat{Y})+w(\{u_1\},V_{\overline{I}})}{|I|+M}\\
&= \frac{2(w(I)+N|I|)+(k-3)N/3+3B+(n+1)N - w(I)-N|I|}{|I|+M}\\
&= \frac{w(I)+N|I|+(k-3)N/3 +3B+(n+1)N}{|I|+M}\\
&=\frac{4B+N(n+1+(k-3)/3)+N|I|}{|I|+M}= \frac{NM+N|I|}{|I|+M}=N.
\end{align*} 
For the converse, suppose that $\{S_1,\ldots, S_k\}$ be a $k$-partition of $V(G)$ such that $\phi(S_i)\leq N $, for each $i\in\{1,\ldots, k\}$. For $t\in \{1,2,3\}$, define $U_t'=U_t \cup\{u_t\}$. Since the weight of the edge $u_tu^t_l$ is equal to $+\infty$, each $U'_t$ is completely included in one subset $S_i$.

If for some $i$, $S_i\cap \{v_1,\ldots, v_{n+1}\}$ is non-empty, then $S_i \cap (U'_1\cup U'_2\cup U'_3)$ is also non-empty. For if $S_i\cap (U'_1\cup U'_2\cup U'_3)=\emptyset $, then $\phi(S_i)> \frac{N|S_i|}{|S_i|}=N$ which is a contradiction. 

Now, since $|\hat{Y}|=k-3$, there exist at least three subsets, say $S_1,S_2,S_3$, such that for each $i\in\{1,2,3\}$, $S_i\cap \hat{Y}$ is empty. By the above argument, for each $i\in \{1,2,3\}$, $S_i$ has non-empty intersection with $U'_1\cup U'_2\cup U'_3$. Therefore, w.l.o.g. we can assume that $U'_i\subset S_i$, $1\leq i\leq 3$. Now, for each $j\in\{1,2,3\}$, define $I_j=\{i, v_i\in S_j\}$ and $w(I_j)=\sum_{i\in I_j} w_i$. So, $\{I_1,I_2,I_3\}$ is a partition of $\{1,2,\ldots,n+1\}$ and for each $j\in\{1,2,3\}$, we have
\begin{align*}
\phi(S_j)&= \frac{2(w(I_j)+N|I_j|)+(k-3)N/3 +3B+(n+1)N- w(I_j)-N|I_j|}{|I_j|+M}\\
& = \frac{w(I_j)+N|I_j|+3B+(n+1)N+(k-3)N/3 }{|I_j|+M}\\
& = \frac{w(I_j)+N|I_j|+NM-B}{|I_j|+M}\leq N.
\end{align*}
Thus, for each $j\in\{1,2,3\}$, $w(I_j)\leq B$. Now, since $\sum_{i=1}^{n+1} w_i=3B$, we have $w(I_j)=B$, $1\leq j\leq 3$. Also, w.l.o.g. we may assume that $n+1\in I_3$ and therefore, $I_1\subset \{1,2,\ldots, n\}$ and $w(I_1)=B$. This completes the proof.
\end{proof}
\vspace{-0.4cm}
\begin{figure}[H]
  \centering
  \begin{tikzpicture}[scale = 0.9]
 
  \draw[draw=black] (0,0) rectangle (0.75,5);
  \node at (-0.5,2.5) {$X$} ;
  \node at (8.2,2.5) {$Y$} ;

  \node[inner sep=1pt, fill=black,circle,draw] (u1) at ({0.37}, {3.5}) {};
  \node[inner sep=1pt, fill=black,circle,draw] (u2) at ({0.37}, {2.5}) {};
  \node[inner sep=1pt, fill=black,circle,draw] (u3) at ({0.37}, {1.5}) {};
  
   \node at (0.37,3.8) {$u_{1}$} ;
   \node at (0.37,2.8) {$u_{2}$} ;
   \node at (0.37,1.8) {$u_{3}$} ;

 \draw[draw=black] (2.7,-0.5) rectangle (7.7,5);
    
 \draw[draw=black, rounded corners, thick] (3,4) rectangle (7,4.5);
 \draw[draw=black, rounded corners, thick] (3,2.5) rectangle (7,3);
 \draw[draw=black, rounded corners, thick] (3,1) rectangle (7,1.5);
 \draw[draw=black, rounded corners, thick] (3,0.5) rectangle (7,1);
  \draw[draw=black, rounded corners, thick] (3,0) rectangle (7,0.5);

   \node[inner sep=1pt, fill=black,circle,draw] (v1) at ({3.75}, {4.25}) {};
   \node[inner sep=1pt, fill=black,circle,draw] (v2) at ({4.75}, {4.25}) {};
  \node at (5.6,4.25) {$\textbf{\dots}$} ;
   \node[inner sep=1pt, fill=black,circle,draw] (vn+1) at ({6.3}, {4.25}) {};
  \node at (7.4,4.25) {$V$} ;
  \node[inner sep=1pt, fill=black,circle,draw] (y1) at ({3.75}, {2.75}) {};
   \node[inner sep=1pt, fill=black,circle,draw] (y2) at ({4.75}, {2.75}) {};
  \node at (5.6,2.75) {$\textbf{\dots}$} ;
   \node[inner sep=1pt, fill=black,circle,draw] (yn+1) at ({6.3}, {2.75}) {};
  \node at (7.4,2.75) {$\hat{Y}$} ;

\node at (5.1,0.25) {$u_{1}^{1} \dots u_{M-1}^{1}$} ;
\node at (5.1,0.75) {$u_{1}^{2} \dots u_{M-1}^{2}$} ;
\node at (5.1,1.25) {$u_{1}^{3} \dots u_{M-1}^{3}$} ;
\node at (7.4,0.75) {$U_{2}$} ;
\node at (7.4,0.25) {$U_{1}$} ;
\node at (7.4,1.25) {$U_{3}$} ;

  
\draw[] (u1) to ({3}, {4.25});
\draw[] (u1) to ({3}, {2.75});
\draw[] (u1) to ({3}, {0.25});

\draw[] (u2) to ({3}, {4.25});
\draw[] (u2) to ({3}, {2.75});
\draw[] (u2) to ({3}, {0.75});

\draw[] (u3) to ({3}, {4.25});
\draw[] (u3) to ({3}, {2.75});
\draw[] (u3) to ({3}, {1.25});

  \end{tikzpicture}
  \caption{Schema of the adjacencies in the bipartite graph $G = (X, Y)$.}
  \label{fig:cbg}
\end{figure}

\begin{corollary}\label[cor]{cor:NPtw3}
For every fixed integer $k\geq 3$ and $t\geq 3$, the problem $k$SC is NP-hard for the graphs with treewidth at most $t$.
\end{corollary}
%
For our next result, we show that \textsc{$k$-Sparsest Cut} remains NP-hard on graphs with maximum degree at most three and also on graphs with degeneracy at most two. The idea is similar to the one in \cite{feige1997densest}. 

\begin{theorem}\label{thm:2}
For every fixed integer $k\geq 2$,
\begin{itemize} 
\item[\rm (i)] the problem $k$SC is NP-hard for the graphs with maximum degree three, and
\item[\rm (ii)] the problem $k$SC is NP-hard for the graphs with degeneracy two.
\end{itemize}	
\end{theorem}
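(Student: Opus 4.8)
The plan is to give, for each fixed $k\ge 2$, a polynomial reduction that starts from an instance on which $k$SC is already known to be hard and then applies two local surgeries that respectively cap the maximum degree at three and the degeneracy at two, while preserving the edge expansion of every part. For the source I would take an NP-hard instance of weighted $k$SC on a general graph: for $k\ge 3$ this is provided by the \textsc{Partition}-based construction of \Cref{thm:NPtau3}, and for $k=2$ by the classical NP-hardness of \textsc{Sparsest Cut} (recall $2$SC $=\psi_{|V|/2}$ by \eqref{eq:phi2}). The whole difficulty is that $\phi_G(S)$ carries a vertex count $|S|$ in the denominator, so every gadget vertex I introduce must be accounted for; engineering the construction so that these extra vertices land predictably is exactly the step I expect to be the main obstacle.

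For part~(i) I would use a heavy vertex-replacement gadget. Let $G$ be the source graph and $D=\Delta(G)$. Replace each vertex $v$ by a cycle $C_v$ on exactly $D$ vertices whose cycle edges all carry a huge weight $W$, and route the (at most $D$) edges of $G$ incident to $v$ to distinct vertices of $C_v$, keeping their original weights. Every vertex of the new graph $G'$ then has degree at most three (two cycle edges and at most one original edge), so $\Delta(G')\le 3$. Because all cycles have the same size $D$, a $k$-partition of $G$ that keeps each $C_v$ monochromatic lifts to a $k$-partition of $G'$ with $|S_i'|=D|S_i|$ and the same boundary weight, so $\phi_{G'}(S_i')=\phi_G(S_i)/D$, and the threshold $N$ maps to $N/D$. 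Conversely, if some $C_v$ were split, then at least one weight-$W$ cycle edge would be cut and the corresponding part would have expansion at least $W/(Dn)>N/D$ once $W$ is chosen larger than $Nn$; hence every partition below the threshold keeps the cycles together and projects back to a $k$-partition of $G$ of the required value.

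For part~(ii) I would lower the degeneracy by subdividing edges, using the elementary fact that the subdivision of any graph is $2$-degenerate: every subdivision vertex has degree exactly two, so in any induced subgraph one may first delete all subdivision vertices (each of degree $\le 2$) and is then left with an edgeless set of original vertices. Concretely I subdivide each non-gadget edge $uv$ once, giving the new vertex $m_{uv}$ edges to $u$ and to $v$, weighted so that exactly one of the two halves is cut whenever $u,v$ lie in different parts; the boundary weight of every part is thereby preserved. Combining this with the gadget of part~(i) (cycle vertices then have degree $\le 3$ and subdivision vertices degree $2$) even yields a single graph that is simultaneously of maximum degree three and degeneracy two.

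The genuinely delicate point is the denominator bookkeeping for the subdivision vertices: an $m_{uv}$ whose endpoints lie in the same part may join it for free, whereas an $m_{uv}$ across a cut may go to either side, and this placement is partition-dependent, so a naive subdivision does not preserve $\phi$ exactly. I would resolve this by not reducing from a generic graph but by reimplementing the exact arithmetic of the \textsc{Partition} reduction of \Cref{thm:NPtau3}: there the ``yes'' partition and its set of cut edges are explicitly known, so the number of subdivision vertices falling into each part is a fixed, computable quantity, the target $N$ can be recomputed so that each part again has expansion exactly $N$, and the converse uses the same integrality argument (``$w(I_j)\le B$ for all $j$ forces $w(I_j)=B$'') after observing that each subdivision vertex is placed with a neighbour. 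Verifying that the heavy gadget edges can never be split below the threshold and that these adjusted size counts still pin every part's expansion to $N$ is the crux of the argument.
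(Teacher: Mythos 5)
Your part (i) is essentially the paper's own proof: the paper likewise replaces every vertex of a hard weighted $k$SC instance by a cycle of prohibitively heavy edges (it uses $n$-cycles where you use $\Delta$-cycles, which is immaterial since all cycles have equal length), routes each original edge to distinct cycle vertices, and argues that below the threshold no heavy edge can be cut, so the expansion of every part scales by exactly the cycle length. That half of your proposal is correct.

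The genuine gap is in part (ii), and your own diagnosis locates it: with a single subdivision vertex $m_{uv}$ per edge, the midpoints of cut edges are placed by the solution, so the denominators of the expansions are partition-dependent. Your proposed repair does not close this gap, for two reasons. First, re-deriving the arithmetic of \Cref{thm:NPtau3} only helps in the completeness direction, where the intended partition and its cut edges are known; in the soundness direction an arbitrary partition of the subdivided graph can distribute the cut-edge midpoints adversarially (every part gains by absorbing such a vertex, since the numerator is unchanged while the denominator grows by one), so the tight integrality step ``$w(I_j)\le B$ for all $j$ forces $w(I_j)=B$'' must be re-established under these $\pm 1$ perturbations of the part sizes --- precisely the verification you defer as ``the crux'' and never carry out. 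Second, the repair is structurally unavailable for $k=2$: there your source is a generic \textsc{Sparsest Cut} instance rather than the \textsc{Partition} gadget, so there is no explicitly known yes-partition whose cut edges you can count, and part (ii) for $k=2$ is left with no argument at all. The paper's proof avoids all of this with one extra idea you are missing: replace each edge $e=ab$ by a path $acdb$ of length three, where the middle edge $cd$ carries the original weight $w(e)$ and the two end edges $ac$, $db$ carry prohibitively large weight. The heavy end edges force $c$ to lie with $a$ and $d$ with $b$ in any partition below the threshold, so every cut edge contributes exactly one new vertex to each side and every internal edge contributes two vertices to its own part; the bookkeeping becomes partition-independent, the expansion transforms by an explicit monotone formula (the paper states $\phi_k(G)\le N$ if and only if $\phi_k(G'')\le N/(n+N)$), and the reduction works from an arbitrary weighted instance for every $k\ge 2$, with the target graph still $2$-degenerate because its degree-three vertices form an independent set. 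Replacing your single subdivision by this double-subdivision gadget is what closes the gap.
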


\begin{proof}
	We give a reduction from $k$SC for general (unweighted) graphs which is known to be NP-hard for every fixed integer $k\geq 2$ \cite{daneshgar2012complexity}. 
	Let $G$ be an instance of $k$-\textsc{Sparsest Cut}, where $G = (V, E)$ is a simple unweighted graph, where $ V(G)= \{v_{1}, v_{2}, \dots, v_{n} \}$.
	%
	We construct a weighted graph $G^{\prime}$ with maximum degree three and a weight function $w:E(G')\to \mathbb{R}^+$ as follows.
	
	For every $1 \leq i \leq n$, let $C^{i}$ be an $n$-cycle on vertices ($v_{1}^{i}, \dots, v_{n}^{i}$) and let $V(G')=\cup_{i=1}^n V(C^i)$. For each edge of $G$, say $e = \{ v_{i}, v_{j} \}$, create an edge $e^{ij}$ in $G'$ between vertices $v_{j}^{i}$ and $v_{i}^{j}$ and let $w(e^{ij}) = 1$. Also, let the weights of the edges in the cycles $C^i$ be a sufficiently large integer $M$.
	It is clear that the construction can be done in polynomial time and the obtained graph $G'$ has a maximum degree of three. Now, for every $k$-partition $\{S_1,\ldots,S_k\}$ of $V(G)$ and every $j\in [k]$, let $S'_j=\cup_{v_i\in S_j}{V(C^i)}$.
	Therefore, $\phi(S'_j)= ({1}/{n}) \phi(S_j)$. Moreover, since the edge weights of cycles, $C^i$ are large enough, in every minimizer for $G'$, all vertices of each $C^i$ appear in the same part. Hence, $\phi_k(G)=n \phi_k(G')$. So, the reduction preserves the edge expansion and this proves (i).
	In order to prove (ii), first, note that the $k$-\textsc{Sparsest Cut} problem remains NP-hard for regular graphs \cite{raghavendra2010graph}. So, assume that $G$ is a $d$-regular graph. For each edge, $e=\{v_i,v_j\}$ of $G$, replace the edge $e^{ij}$ in $G'$ with a path of length three $P_e=acdb$ such that $w(cd)=1$ and $w(ac)$ and $w(db)$ are equal to a sufficiently large  integer (the edges in each cycle $C_i$ remain unchanged). Call the obtained graph $G''$ and it is clear that the degeneracy of $G''$ is equal to two (since the vertices of degree three induce a stable set). Now, let $\{S'_1,\ldots,S'_k\}$ be as above. Let $e=uv$ be an edge between $S'_i$ and $S'_j$, where $u\in S'_i$ and $v\in S'_j$. Also, let $P_e=uxyv$ be its corresponding path in $G''$. Then, we add $x$ to $S'_i$ and $y$ to $S'_j$. We do this for all such edges to obtain a $k$-partition $\{S''_1,\ldots, S''_k\}$ for $V(G'')$. Let $\phi(S_i)={m_i}/{|S_i|}$, where $m_i$ is the number of outgoing edges from $S_i$. Then, it is clear that $\phi(S''_i)={m_i}/(n|S_i|+d|S_i|)$. Therefore, $\phi_k(G)\leq N$ if and only if $\phi_k(G'')\leq N/(n+d)$. This completes the proof. 
\end{proof}

The following can be concluded immediately.

\begin{corollary}\label[cor]{cor:delta}
	The problem $k$SSE is NP-hard for the graphs with maximum degree three and also for the graphs with degeneracy two.
 \end{corollary}

\begin{corollary}\label[cor]{cor:k,delta}
 The problem $k$SC is W[1]-hard for $(k,\Delta)$ combined and also $(k,d)$ combined, where $\Delta$ and $d$ are respectively the maximum degrees and the degeneracy of the input graph.
\end{corollary}

\section{Unweighted version}\label{sec4}
In this section, we consider the unweighted version of the $k$-\textsc{Sparsest Cut} problem, that is, when the edge weights are equal to one. First, we show that the problem is W[1]-hard parameterized by three combined parameters tree-depth, feedback vertex set number, and $k$. This also implies the hardness of the problem with respect to the treewidth of the input graph. Moreover, we show that no algorithm can solve $k$-\textsc{Sparsest Cut} in time $n^{o(k+\tw/\log(k+\tw))}$, unless ETH is false.
\begin{theorem}\label{thm:w1}
	The unweighted $k$SC problem is W[1]-hard when parameterized by the three combined parameters tree-depth, feedback vertex set number, and $k$.
\end{theorem}
\begin{proof}
	We give a parameterized reduction from \textsc{Unary Bin Packing} parameterized by the number of bins defined as follows.
\decprob{\textsc{Unary Bin Packing}}{Positive integers $w_{1},\dots, w_{l}, b, C$ each encoded in unary.}{Can we partition $l$ items with weights $w_{1},\dots,w_{l}$ into $b$ bins such that sum of the weights in each bin does not exceed $C$?}

	Jansen et al. \cite{jansen} showed that \textsc{Unary Bin Packing} is W[1]-hard when parameterized by the number $b$ of bins, and more precisely, that the problem cannot be solved in $f(b) \cdot n^{o(b/\log b)} $ time for any computable function $f$ for an input of length $n$ under the ETH. Let us consider an instance of \textsc{Unary Bin Packing} as $I=(w_{1},\dots,w_{l},b,C)$. Also, let $W=\sum_{i=1}^l w_i$. If $W> b \cdot C$, then evidently it is a NO-instance. Without loss of generality, we may assume that $W=b \cdot C$, since otherwise, we can add $b \cdot C-W$ items of weights equal to one. Then, we construct an instance $I'$ for $k$SC. For our convenience, first, we construct a weighted instance of $k$SC in which vertices are weighted. Then, using a unitarization technique given in \cite{daneshgar2012complexity}, we construct an unweighted instance of $k$SC. When $w:V(G)\to \mathbb{R}^+$ is a vertex weight function, the edge expansion of $S\subseteq V(G)$ is defined as $\phi(S)={|E(S,\overline{S})|}/{w(S)}$. The instance $I'$ consists of a weighted complete bipartite graph $G$ defined as: $V(G):=\{u_{1},u_{2},\dots,u_{b},v_{1},\dots,v_{l+b}\}, \hspace{2mm} E(G):= \{ u_{j}v_{i}, 1\leq j \leq b, 1\leq i\leq l+b \},$
$w(u_{j})= M-\epsilon, w(v_{i}):=w_{i}+C+1+B, \forall\ 1\leq j\leq b, 1\leq i \leq l,$ and,
$w(v_i):=1+B, \forall\ l+1\leq i\leq l+b$, where $M={(l+b)(C+1+B-\epsilon)}/{(b-2)}$,  $\epsilon$ is an arbitrary small number such that $0<\epsilon \leq 1/(l+b)$ and  $B$ is a constant integer that will be determined later. Also, let all edge weights be equal to one. Let $k = b$ and $X = (l+b)/M$. So, we have an instance $ I'=(G,k,X) $ of $k$SC.  
	
	First, suppose that $I$ is a YES-instance for \textsc{Unary Bin Packing}. Then, there is a partition of $\{1,\ldots, l\}$ into $b$ bins $A_1,\ldots, A_b$ such that $\sum_{i\in A_j}{w_i} = C$, for each $j\in\{1,\ldots, b\}$. Now, for each $j\in\{1,\ldots, b\}$, define $S_j:= \{v_i: i\in A_j\}\cup \{u_j, v_{l+j}\}$. Therefore,
	
	\begin{align*}
	\phi(S_j)&= \frac{l+b-|A_j|-1+(|A_j|+1)(b-1)}{M-\epsilon+(C+1+B)|A_j|+C+1+B} \\ &=\frac{l+b+(b-2)(|A_j|+1)}{M-\epsilon+(C+1+B)(|A_j|+1)} \\
	&\leq \frac{l+b+(b-2)(|A_j|+1)}{M+(C+1+B-\epsilon)(|A_j|+1)} \\
	&\leq \frac{l+b+(b-2)(|A_j|+1)}{M+\frac{(b-2)M}{l+b}(|A_j|+1)}\\
	&=\frac{l+b}{M}=X.
	\end{align*}
	Hence, $I'$ is a YES-instance for $ kSC $. 
	
	Now, for the converse suppose that $I'$ is a YES-instance for $k$SC and $S_1,\ldots, S_b$ is a partition of $V(G)$ such that $\phi(S_j)\leq X=(l+b)/M$. First, we want to prove that each $S_j$ contains exactly one $u_j$. For if $S_j$ contains none of vertices $u_1,\ldots,  u_b$, then let $W_0=\max w_i$ and we have
	
	\[\phi(S_j)\geq \frac{|S_j|b}{(W_0+C+B+1)|S_j|}= \frac{b}{W_0+C+B+1}.  \]
	Now, if we choose integer $B$ such that $2B> W_0(b-2)+b\epsilon$, then 
	\[\phi(S_j)> \frac{b-2}{C+B+1-\epsilon}= \frac{l+b}{M}=X, \]
	which is a contradiction. Therefore, each $S_j$ contains exactly one $u_j$, w.l.o.g. suppose that $u_j\in S_j$, for $1\leq j\leq b$.
	Note that if $S_j=\{u_j\}$, then $\phi(S_j)=\frac{l+b}{M-\epsilon}>X$ which is a contradiction. So, for each $j\in\{1,\ldots b\}$, we have $S_j=A_j\cup \{u_j\}$, where $A_j$ is a non-empty subset of $\{v_1,\ldots, v_{l+b}\}$. 
	
	Therefore, for each $j\in\{1,\ldots ,b\}$
	
	\begin{align*}
	\frac{l+b}{M}\geq \phi(S_j)= \frac{l+b+(b-2)|A_j|}{M-\epsilon+w(A_j)}.
	\end{align*}
	
	Thus, 
	$$C+B+1-\epsilon=\frac{M(b-2)}{l+b}\leq \frac{w(A_j)-\epsilon}{|A_j|}, $$
	and we have 
	\[w(A_j)-(C+B+1)|A_j|\geq -\epsilon(|A_j|-1)>-\epsilon (l+b)\geq -1. \]
	Since the value of the left-hand side is an integer, then 
	
	\begin{equation}\label{eq:Aj}
	w(A_j)\geq (C+B+1)|A_j|.
	\end{equation}
	
	Hence, since $\sum_{j=1}^b	w(A_j)= (C+B+1) (l+b)$, we have $w(A_j)= (C+B+1)|A_j|$, for all $j\in \{1,\ldots,b\}$.
	If one $A_j$ contains none of vertices in $\{v_{l+1},\ldots,v_{l+b}\}$, then
	$w(A_j)> (C+B+1)|A_j|$ which is a contradiction. Thus, each $A_j$ contains exactly one of vertices in $\{v_{l+1},\ldots,v_{l+b} \}$. Let us set $B_j= \{i: 1\leq i\leq l, v_i\in A_j\}$. Then, for each $j\in\{1,\ldots,b\}$, 
	
	\[\sum_{i\in B_j}w_i= w(A_j)-(B+C+1)|A_j|+C= C. \]
	
This shows that $I$ is a YES-instance for \textsc{Unary Bin Packing}.	
	Finally, we show how we can construct a graph $G'$ from the graph $G$ where the vertices in $G'$ are unweighted. For this, we use the unitarization technique given in \cite{daneshgar2012complexity}. First, we choose a large integer $\chi$ such that for every vertex $u$, $\chi w(u)\geq |E(G)|$. Then, for every vertex $u$, we add a set of exactly $\chi w(u)-1$ new vertices and join all of these vertices to $u$. In the obtained graph $G'$, all edge and vertex weights are equal to one and it is easy to see that $\phi_k(G)=\chi \phi_k(G')$ (for a precise proof, see \cite{daneshgar2012complexity}). 
	Also, since the weights are written in unary code, it is a polynomial-time process. 
	
	Finally, note that since $G$ is a complete bipartite graph,
	both feedback vertex set number and tree-depth of $G$ are bounded by $b$, and since adding pendant vertices does not change the feedback vertex set number and increase the tree-depth by at most one, we see that both parameters are still bounded by $b$ after the unitarization process. 
	
	Now since $k=b$, this is a parameterized reduction w.r.t feedback vertex set number, tree-depth, and the number $k$, combined.
\end{proof}

The above proof also implies that the $k$SC remains W[1]-hard w.r.t the treewidth of the input graph and the number $k$, combined.
We state this observation as the following corollary of \cref{thm:w1} and leave the reader to check the detail on the algorithmic barrier of graph-width parameters \cite{belmonte2020grundy}.

\begin{corollary}\label[cor]{cor:tw}
The unweighted $k$SC problem is W[1]-hard when parameterized by the treewidth of the input graph and the number $k$ combined.
\end{corollary}

\begin{theorem}\label{eth}
Assuming ETH, the unweighted $k$SC cannot be solved in time $f(\alpha)n^{o(\alpha/\log \alpha)}$, for a computable function $f$, where $\alpha=\max\{k,\tw\}$ and $\tw,n$ are respectively the treewidth and the order of the input graph.
\end{theorem}

\begin{proof}
Let $I = (w_{1}, \dots, w_{l}, b, C)$ be an instance of length $n$ for \textsc{Unary Bin Packing}, i.e. $n=O(w_1+\cdots+w_l+b+C)$. \cref{thm:w1} shows that we can construct an instance $I^{\prime} = (G, k, X)$ of $k$SC in $f_{1}(b)n^{O(1)}$ time such that $k = \tw(G)=b$. Now, if we could solve $k$SC in $f_{2}(\alpha)|I^{\prime}|^{o(\alpha/ \log(\alpha))}$ time for some function $f_{2}$, then we would be able to solve \textsc{Unary Bin Packing} in $f_{3}(b)n^{o(b/\log b)}$ for some function $f_{3}$. This would imply that ETH fails by \cite{jansen}.
\end{proof}

For our next result, we present an FPT algorithm for $k$-\textsc{Sparsest Cut} when the parameter is the treewidth of the input graph and $k$ is a fixed integer.
Daneshgar et al. \cite{daneshgar2012complexity} proved that when $k$ is in input, the problem $kSC$ is NP-complete even for unweighted trees. 
Here, we prove that when $k$ is a fixed integer, the problem $kSC$ admits an FPT algorithm with respect to the treewidth of the input graph.  

\begin{theorem}\label{thm:tw}
For every fixed integer $k \geq 2$, the unweighted $k$SC can be solved in $ O(k^{\tw} n^{6k+1})$ and uses space exponential to $\tw$, where $n$ and $\tw$ are respectively the order and the treewidth of the input graph.
\end{theorem}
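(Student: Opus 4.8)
The plan is to generalize the tree-decomposition dynamic program of \Cref{thm:twk=2} from $k=2$ to general fixed $k$. The essential difficulty is that the objective $\max_{1\le i\le k}\phi_G(S_i)=\max_i |E(S_i,\overline{S_i})|/|S_i|$ is a maximum of $k$ ratios, and a single ratio does not decompose additively when a partial solution is split across a bag. For $k=2$ this was circumvented by indexing the table by the size $s$ of the relevant part, so that for a fixed $s$ minimizing the ratio is the same as minimizing its integer numerator. For general $k$ I would instead keep the entire combinatorial profile of a partial solution explicit: because the graph is unweighted, every part size lies in $\{0,\dots,n\}$ and every per-part cut count lies in $\{0,\dots,|E|\}$ with $|E|<n^2$, so there are only polynomially many profiles for each fixed $k$.

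Concretely, I would compute a nice tree decomposition of width $tw$ (via the algorithms cited in \Cref{sec:prelim}), and for each node $t$ maintain a boolean table $D_t[c,\vec s,\vec m]$ indexed by a $k$-coloring $c\colon X_t\to\{1,\dots,k\}$ of the bag (recording, for each bag vertex, the part it is assigned to), a size vector $\vec s=(s_1,\dots,s_k)$, and a cut vector $\vec m=(m_1,\dots,m_k)$. The entry is \textsc{true} iff the vertices of $G_t$ admit a $k$-coloring extending $c$ on $X_t$ such that part $i$ meets $V_t$ in exactly $s_i$ vertices and exactly $m_i$ edges of $G_t$ have precisely one endpoint in part $i$. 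The recurrences mirror the four node types: a leaf carries the empty profile; at an introduce node for $v$ one chooses a color $a$ for $v$, increments $s_a$, and for every neighbour $u$ of $v$ in the bag increments both $m_a$ and $m_{c(u)}$ whenever $c(u)\neq a$; a forget node projects away the forgotten vertex by OR-ing over its color; and a join node combines two child profiles sharing the coloring $c$ by setting $s_i=s_i^{(1)}+s_i^{(2)}-|c^{-1}(i)|$ and $m_i=m_i^{(1)}+m_i^{(2)}-\beta_i(c)$, where $\beta_i(c)$ counts the edges inside $G[X_t]$ crossing the boundary of part $i$ (the only edges double counted, since the bag separates the two subtrees). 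At the root $r$, whose bag is empty and whose subgraph is all of $G$, the answer is
\[
\min\bigl\{\, \max_i m_i/s_i \;:\; D_r[\emptyset,\vec s,\vec m]=\textsc{true},\ s_i\ge 1\ \forall i \,\bigr\},
\]
and for a threshold $N$ one simply checks for a reachable profile with $m_i\le N s_i$ for all $i$.

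For the running time, each node has at most $k^{tw+1}\cdot (n+1)^k\cdot (n^2)^k=O(k^{tw}n^{3k})$ profiles, since there are $k^{|X_t|}\le k^{tw+1}$ colorings, $O(n^k)$ size vectors and $O(n^{2k})$ cut vectors; this table size is exponential in $tw$, which accounts for the space bound. Leaf, introduce and forget transitions cost $O(k^{tw}n^{3k})$, while the dominant step is the join: for each fixed coloring one ranges over the $O(n^{3k})$ profiles of the first child and the $O(n^{3k})$ profiles of the second, giving $O(k^{tw}n^{6k})$ work per join node. Multiplying by the $O(n)$ nodes of the decomposition yields the claimed $O(k^{tw}n^{6k+1})$ bound. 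I expect the join node to be both the running-time bottleneck and the only place requiring genuine care, precisely because the inclusion–exclusion on the shared bag (subtracting $|c^{-1}(i)|$ from the sizes and $\beta_i(c)$ from the cuts) is exactly what lets the non-decomposable max-of-ratios objective be reconstructed \emph{exactly} at the root from the additive, integer-valued size and cut coordinates.
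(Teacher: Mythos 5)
Your proposal is correct and follows essentially the same route as the paper's proof: a dynamic program over a nice tree decomposition whose states are a $k$-coloring of the bag together with per-part size and cut-count vectors, with the same inclusion--exclusion correction at join nodes and the same accounting giving $O(k^{tw}n^{6k+1})$ time and space exponential in $tw$. The only cosmetic difference is that you record cut counts exactly while the paper uses an upper-bound convention ($|E(S_j,\overline{S_j})|\le\mu[j]$); both conventions support the same recurrences and the same extraction of $\min\max_i m_i/s_i$ at the root.
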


\begin{proof}
The idea is similar to the one in the proof of \Cref{thm:twk=2} and is based on a dynamic program that computes the values of a table on the nodes of the tree decomposition of the graph in a bottom-up fashion. For convenience and easier analysis, we use a nice tree decomposition.

Let $\tw$ be the treewidth of $G$ and consider a nice tree decomposition $ (T, \{ X_{t} \}_{t\in V(T )} ) $ for $G$ of width $\tw$ as defined in Section~\ref{sec:prelim}. Let $T_{i}$ be the subtree of $T$ rooted at node $i\in V(T)$ and $G_{i} = (V_{i}, E_{i})$ be the subgraph of $G$ induced by the vertices in $\bigcup_{j \in V(T_{i})} X_{j}$. 
For each node $i\in V(T)$, consider a solution $(S_1,\ldots, S_k)$ to $k$SC on $G_i$ as a $k$-partition of $V_i$ and define a configuration vector $c \in \{1,\ldots,k\}^{|X_{i}|}$, where $c[u] = j$, iff  $u\in X_i\cap S_j$. 

 Also, define two vectors $\nu\in \{0,\ldots, n\}^k$ and $\mu\in \{0,\ldots, n^2\}^k$, where $\nu[j]$ equals to the size of $S_j$ and $\mu[j]$ is equal to the size of $E(S_j,\overline{S_j})$. Moreover, consider a table $A$ with $|V(T)|$ rows and at most $k^{\tw}n^{3k}$ columns, where each row of $A$ represents a vertex $i\in V(T)$ and each column of $A$ represents a configuration vector $ c $ and two vectors $\nu$ and $\mu$.
The value of an entry of this table $ A[i,c,\nu,\mu] $ is equal to one if and only if there is a $k$-partition $(S_1,\ldots, S_k)$ of $V(G_i)$ such that for each $j\in\{1,\ldots, k\}$, $S_j\cap X_i=\{u\in X_i \: \ c[u]=j\}$, $|S_j|=\nu[j]$ and $|E(S_j,\overline{S_j})|\leq \mu[j]$. If there is no such partition, we define $A[i,c,\nu,\mu]=0$.

\subparagraph*{Leaf nodes.}
The algorithm examines the nodes of $T$ in a bottom-up manner and fills in each row of the table $A$. In the initialization step, for each leaf node $i\in V(T)$, since $X_i$ and $V_i$ are both empty, we have $A[i,c,\nu,\mu] =1$ if and only if for all $j\in\{1,\ldots, k\}$, $\nu[j]=0$.  
\subparagraph*{Forget nodes.}
Let $i$ be a forget node with the child $X_l$, where $X_i=X_l\setminus \{v\}$. Then, consider a configuration $c$ and two vectors $\nu,\mu$ for node $i$. 
The configuration $c$ is extended to a one for node $l$ by deciding the part to which vertex $v$ belongs. So, for every $j\in\{1,\ldots, k\}$, define configuration $c_j$, where $c_j[u]=c[u]$, for every $u\in X_i$ and $c_j[v]=j$.   Hence, we get
$$ A[i,c,\mu,\nu] = \max_{1\leq j\leq k} A[l,c_j,\mu,\nu]. $$

\subparagraph*{Introduce nodes.}
Let $i$ be an introduce node with child $l$, where $X_i=X_l\cup \{v\}$ and $v\notin X_l$. Now, consider a configuration $c$ and two vectors $\mu,\nu$ for node $i$. Suppose that $c[v]=j_0$ and let $c'$ be the configuration obtained by restriction of $c$ to $X_l$. Also, define the vectors $\mu'$ and $\nu'$ as follows.

\begin{align*}
\nu'[j]&=\begin{cases}
\nu[j] & \text{ if }j\neq j_0, \\
\nu[j]-1 & \text{ if }j =j_0,
\end{cases} \\
\mu'[j]&= \begin{cases}
\mu[j]-|E(c^{-1}(j),\{v\})| & \text{ if }j\neq j_0, \\
\mu[j]-|E(\{v\},X_l\setminus c^{-1}(j_0))| & \text{ if }j= j_0.
\end{cases}
\end{align*}

Therefore,  
\[A[i,c,\mu,\nu]= A[l,c',\mu',\nu'].\]

To see this, note that since $v$ is in $S_{j_0}$, if we remove vertex $v$, the size of $S_j$ is subtracted by one if $j=j_0$ and does not change, otherwise. Also, since all  neighbors of $v$ are in $X_l$, $|E(S_{j_0},\overline{S_{j_0}})|$ is reduced by $|E(\{v\}, \overline{S_{j_0}})|=|E(\{v\}, X_l\setminus c^{-1}(j_0))|$ and for $j\neq j_0$,  $|E(S_{j},\overline{S_{j}})|$ is reduced by $|E(S_j,\{v\})|=|E(c^{-1}(j),\{v\})|$.
  
\subparagraph*{Join nodes.}
Let $i$ be a join node with two childs $l_1,l_2$, where $X_i=X_{l_1}=X_{l_2}$. Also, consider a configuration $c$ and two vectors $\mu,\nu$ for node $i$. 
Therefore, 

\[A[i,c,\mu,\nu]= \max_{\stackrel{\mu_1,\mu_2}{\nu_1,\nu_2}} \min_{1\leq t\leq 2} A[l_t,c,\mu_t,\nu_t],\]
where the maximum is taken over all vectors $\nu_1,\nu_2,\mu_1,\mu_2$ such that for every $j\in\{1,\ldots, k\}$, 
\begin{align*}
\nu[j]&= \nu_1[j]+\nu_2[j]- |c^{-1}(j)|, \\
\mu[j]&= \mu_1[j]+\mu_2[j]- |E(c^{-1}(j), X_i\setminus c^{-1}(j))|.
\end{align*}

To see this, note that $V_{l_1}\cap V_{l_2}=X_i$. So, each $k$-partition of $V_i$ can be divided into two $k$-partitions for $V_{l_1}$ and $V_{l_2}$ where they both agree on $X_i$. For a subset $S\subset V_i$, if we define $S_t=S\cap V_{l_t}$, $1\leq t\leq 2$, then $|S|=|S_1|+|S_2|-|S_1\cap S_2|=|S_1|+|S_2|-|S\cap X_i|$. Also, 
$|E(S,V_i\setminus S)|=|E(S_1,V_{l_1}\setminus S_1)|+|E(S_2,V_{l_2}\setminus S_2)|- |E(S\cap X_i, X_i\setminus S)|$.

Each entry of the table for leaf, forget, introduce and join nodes can be computed in worst cases $O(1), O(k), O(k), O(n^{3k}) $, respectively. Also, the size of the table is $O(k^{\tw} \times n^{3k+1}) $. Hence, the runtime of the whole algorithm is at most $O(k^{\tw} n^{6k+1})$.
\end{proof}

\begin{remark}
By W[1]-hardness result of Theorem~\ref{thm:w1},  the running time of the algorithm is unlikely to be improved to $f(k,\tw)n^{O(1)}$.
\end{remark}

\Cref{thm:tw} implies that for every fixed $k\geq 2$ the unweighted $k$SC is FPT with the parameter treewidth $\tw$, while, in contrast, the weighted version is NP-hard for every $k,\tw\geq 3$ (\Cref{cor:NPtw3}).
Also, since treewidth of a graph is bounded by its vertex cover number, this implies fixed-parameter tractability of the unweighted $k$SC with the parameter vertex cover number $\vc$, in the sense that $k$SC can be solved in time $O(k^{\vc})\, n^{O(k)}$. This gives rise to the question that if $k$SC can be solved in time $f(k,\vc)\, n^{O(1)}$ for some computable function $f$. In the following theorem, we give an affirmative answer to this question by proving that $k$SC is FPT with respect to $(k,\vc)$. 

\begin{theorem}\label{thm:vc}
	For every fixed integer $k\geq 2$,  there is an algorithm that solves the unweighted $k$SC in time $2^{O(k^2\vc^k \log \vc)}\, n^{O(1)}$ and in polynomial space, where $n$ and $\vc$ are respectively the order and the size of the vertex cover number of the input graph.
\end{theorem}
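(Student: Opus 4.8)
The plan is to follow the same high-level strategy as \Cref{thm:twk=2,thm:vck=2}: let $C$ be a vertex cover of $G$ with $|C|=\tau$ and let $I=V(G)\setminus C$ be the remaining independent set. First I would branch over all ways the solution can split the cover, that is, over all functions $\sigma\colon C\to\{1,\dots,k\}$ assigning each cover vertex to one of the $k$ parts; there are $k^{\tau}$ of these. The crucial point for the space bound is that this branching is carried out by \emph{iterating} over the assignments one at a time, reusing the same workspace, so the factor $k^{\tau}$ enters only the running time and never the space. This is exactly where the present algorithm improves on the treewidth dynamic program of \Cref{thm:tw}, whose table of size $k^{tw}$ must be stored in full.

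Fix one assignment $\sigma$ and write $C_j=\sigma^{-1}(j)$ for the part of the cover placed in $S_j$. The key observation is that, once $\sigma$ is fixed, the $k$ cut values \emph{decouple}: since $I$ is independent, every edge lies inside $C$ or between $C$ and $I$, so each part's cut is determined solely by which independent vertices are added to it. Concretely, the edges inside $C$ contribute a fixed amount $\mathrm{CC}(j)=|E(C_j,C\setminus C_j)|$ to the cut of $S_j$, and for an independent vertex $v$ placed in part $j$ a short calculation (taking care that every crossing edge is counted once on each of its two sides) shows that the cut of $S_j$ can be written as
\[
|E(S_j,\overline{S_j})|=\mathrm{base}(j)+\sum_{v\in S_j\cap I}\bigl(\deg(v)-2\,|N(v)\cap C_j|\bigr),
\]
where $\mathrm{base}(j)=\mathrm{CC}(j)+|E(C_j,I)|$ depends only on $\sigma$. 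Thus the numerator of $\phi_G(S_j)$ is a base term plus a sum, over the independent vertices assigned to $S_j$, of a per-vertex per-part weight $g(v,j)=\deg(v)-2|N(v)\cap C_j|$, with no residual interaction between distinct parts.

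Given this decoupling, for each fixed $\sigma$ I would determine the best placement of the independent vertices by a dynamic program that processes the vertices of $I$ one by one and records, for each part $j$, the current size $\nu[j]$ and the current cut value $\mu[j]$ of $S_j$. Since $\nu[j]\in\{0,\dots,n\}$ and $\mu[j]$ ranges over $O(n^{2})$ integer values, for fixed $k$ there are only $n^{O(k)}$ profiles, so the table is of polynomial size and is filled in polynomial time; assigning the next vertex to one of the $k$ parts updates exactly the corresponding entries of $\nu$ and $\mu$. At the end I would scan the reachable final profiles $(\nu,\mu)$, compute $\max_{j}\mu[j]/\nu[j]$ for each, and compare the minimum of these—over all profiles and all $k^{\tau}$ assignments $\sigma$—against the threshold $N$. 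Tallying the $k^{\tau}$ choices of $\sigma$, the polynomially many profiles and the $n$ processing steps gives a running time of the form $O(k^{\tau+1}n^{O(k)})$—within the stated $O(k^{\tau+1}n^{4k+1})$ bound—while the state stored at any moment is only $n^{O(k)}$, i.e.\ polynomial for fixed $k$, as required.

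The main obstacle I expect is the bookkeeping in the decoupling step: one must verify that fixing $\sigma$ really removes all coupling between the parts' cut values—in particular, that the edges from an independent vertex $v$ to cover vertices lying in \emph{other} parts are correctly absorbed into the fixed base terms rather than creating a dependence on the global assignment—and that the double counting of crossing edges (each contributing to two parts) is handled consistently. Once this identity is established, correctness and the polynomial-space guarantee follow routinely, the only remaining care being the treatment of empty parts, for which $\phi_G(S_j)$ must be given its conventional value.
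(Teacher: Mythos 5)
Your proposal is correct and follows essentially the same approach as the paper's proof: fix a vertex cover $C$, iterate over the $k^{\tau}$ assignments of $C$ to the $k$ parts (keeping $k^{\tau}$ out of the space bound), use exactly the decoupling identity $|E(S_j,\overline{S_j})| = W_j + \sum_{v\in S_j\cap I}\bigl(\deg(v)-2|N(v)\cap C_j|\bigr)$ with $W_j=|E(C_j,V(G)\setminus C_j)|$, and finish with a polynomial-size dynamic program over the independent vertices. The only cosmetic difference is that the paper first enumerates the part-size vector $(a_1,\ldots,a_k)$ and converts the threshold $N$ into per-part budgets $b_j=N(a_j+|C_j|)-W_j$, solving a bin-packing-style feasibility DP, whereas you track reachable $(\nu,\mu)$ profiles directly and evaluate the ratios at the end---an equivalent formulation that even saves a factor of roughly $n^{k}$ in the running time.
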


In order to prove this theorem, we need the following classical result by Lenstra which states that \textsc{Integer Programming} is FPT with respect to the number of variables. 

\begin{theorem} {\rm \cite{lenstra}} \label{thm:ilp}
An \textsc{Integer Linear Programming} instance of size $L$ with $p$ variables can be solved using $O(p^{2.5p+o(p)}\, L)$ arithmetic operations and space polynomial in $L$.		
\end{theorem}

\begin{proof}[Proof of \Cref{thm:vc}.]
Suppose that the graph $G$, integer $k\geq 2$ and rational number $N$ is given and we are going to solve the decision problem if $\phi_k(G)\leq N$?	
	
Let $C$ be a vertex cover of the graph $G$ with size $\vc$ and define $I=V(G)\setminus C$. 
The algorithm is based on seeking over all $k$-partitions $(C_1,\ldots, C_k)$ of $C$ and solving an integer program with a bounded number of variables for each such partition.

	Let $C_1,\ldots, C_k$ be a fixed partition of $C$. Also, let $\mathcal{A}$ be the set of all vectors $\mathbf{a}=(a_1,a_2,\ldots, a_k)$ with integer components such that $0\leq a_i\leq |C_i|$ for each $i\in [k]$. Now, for each $\mathbf{a}\in \mathcal{A}$, define $I_{\mathbf{a}}$ to be the set of all vertices in $I$ which has exactly $a_i$ neighbors in $C_i$, for each $i\in [k]$ and let $n_{\mathbf{a}}=|I_{\mathbf{a}}|$.
	
	Now, we distribute vertices of $I$ into the sets $C_1,\ldots, C_k$ and for each vector $\mathbf{a}\in \mathcal{A}$ and integer $i\in [k]$, we define a variable $x_{\mathbf{a},i}$ which indicates the number of vertices in $I_{\mathbf{a}}$ which are added to the set $C_i$. Let $S_i$ be the set obtained from $C_i$ by adding exactly $x_{\mathbf{a},i}$ vertices from $I_{\mathbf{a}}$ for each $\mathbf{a}\in \mathcal{A}$. Then, $(S_1,\ldots, S_k)$ is a $k$-partition of $V(G)$ and for each $i\in [k]$, we have

\begin{align*}
\phi_G(S_i)& = \dfrac{|E(C_i,V(G)\setminus C_i)|+\displaystyle \sum_{\mathbf{a}\in \mathcal{A}} (a_1+\cdots+a_k-2a_i)\, x_{\mathbf{a},i} }{|C_i|+\displaystyle\sum_{\mathbf{a}\in \mathcal{A}} x_{\mathbf{a},i}}.
\end{align*}

Now, for every $i\in [k]$, define $W_i=|E(C_i,V(G)\setminus C_i)|$ and $w_i=a_1+\cdots+a_k-2a_i$. Therefore, we have $\phi_G(S_i)\leq N$ if and only if $W_i+\sum_{\mathbf{a}\in \mathcal{A}} w_i\, x_{\mathbf{a},i}\leq N(|C_j|+\sum_{\mathbf{a}\in \mathcal{A}} x_{\mathbf{a},i})$.

Hence, there exists a $k$-partition $(S_1,\ldots, S_k)$ of $V(G)$ such that for each $i\in [k]$, $S_i\cap C=C_i$ and $\phi_G(S_i)\leq N$ if and only if the following integer programming has a feasible solution.

\begin{align*}
&\sum_{i=1}^k x_{\mathbf{a},i} = n_{\mathbf{a}}, &\forall\, \mathbf{a}\in \mathcal{A},\\
&\sum_{\mathbf{a}\in \mathcal{A}} (w_i-N)\, x_{\mathbf{a},i}\leq N|C_j|-W_i, &\forall\, i\in [k].
\end{align*}

Since $|\mathcal{A}|\leq \vc^k$, the number of variables is at most $k\vc^k$ and by \Cref{thm:ilp}, this integer programming can be solved in time $ (k\vc^k)^{O(k\vc^k)}\, n^{O(1)}$.
Also, note that the number of $k$-partitions $(C_1,\ldots, C_k)$ of $C$ is at most $k^{\vc}$.
Therefore, the running time of the whole algorithm is at most $ 2^{O(k^2\vc^k \log \vc)}\, n^{O(1)}$. Also, each ILP can be solved separately in a polynomial space.
\end{proof}


\section{Small-Set Expansion in Degree-Bounded Graphs}\label{sec5}

In \Cref{cor:delta}, we showed that $k$SSE is NP-hard for graphs with maximum degree three and so is W[1]-hard for the parameter $\Delta$. In this section, we show that $k$SSE is also W[1]-hard for $k$. Also, using a random separation technique, we will prove that $k$SSE is FPT with respect to $(k,\Delta)$. 

\begin{theorem}\label{thm:kssew1k}
The unweighted $k$SSE is W[1]-hard for the parameter $k$. 
\end{theorem}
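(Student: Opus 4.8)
The plan is to give a parameterized reduction from \textsc{Clique}, parameterized by the size $k$ of the sought clique, which is the prototypical W[1]-hard problem \cite{downey2012parameterized}. The starting point is the elementary identity, valid for any unweighted graph and any $S\subseteq V$ with $|S|=s$,
\[ \phi_G(S)=\frac{\sum_{v\in S}\deg(v)-2\,|E(G[S])|}{s}. \]
On a $d$-\emph{regular} graph this simplifies to $\phi_G(S)=d-2|E(G[S])|/s$, so that \emph{minimising} the edge expansion of a set of bounded size is the same as \emph{maximising} its internal density. Since a set of size $s$ spans at most $\binom{s}{2}$ edges, with equality exactly when it induces a clique, this is precisely the handle that turns \textsc{Clique} into $k$SSE: I would set the host graph to be a regular instance, take $N\defeq d-(k-1)$, and let the SSE size bound be $k$.

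The first step, and the main obstacle, is that \textsc{Clique} must be made hard on \emph{regular} graphs, because on a general graph the term $\sum_{v\in S}\deg(v)$ depends on \emph{which} vertices are chosen and so the identity above no longer isolates the clique condition. I would establish (this can also be cited as folklore) that \textsc{Clique} remains W[1]-hard on $\Delta$-regular graphs for the parameter $k$, via the following degree-padding gadget. Given $G$ with maximum degree $\Delta$, take $m\defeq 2\Delta$ disjoint copies $G_1,\dots,G_m$ of $G$; for each vertex $v$, whose deficiency is $\delta(v)\defeq \Delta-\deg_G(v)\le\Delta$, join its copies $v_1,\dots,v_m$ by a $\delta(v)$-regular \emph{bipartite} graph (which exists since $\delta(v)\le m/2$). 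Every vertex of the resulting graph $G'$ then has degree $\Delta$. The crucial point is that padding edges only ever join copies of one and the same original vertex, so any clique of size at least three must lie inside a single copy $G_i$: a clique containing two copies $v_i,v_j$ of one vertex cannot also contain a copy $u_\ell$ with $u\neq v$ (such copies are non-adjacent to at least one of $v_i,v_j$), and cannot contain a third copy $v_\ell$ because each per-vertex gadget is bipartite, hence triangle-free. Therefore $G'$ has a $k$-clique if and only if $G$ does (for $k\ge 3$; the cases $k\le 2$ are trivial), and $G'$ is $\Delta$-regular, as required.

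With a $d$-regular instance $H$ in hand (namely $G'$ with $d=\Delta$), the reduction itself is immediate: output the unweighted $k$SSE instance $(H,k,N)$ with $N\defeq d-(k-1)$. For correctness I would argue as follows. If $S$ induces a $k$-clique then $|S|=k$ and $|E(H[S])|=\binom{k}{2}$, so $\phi_H(S)=d-(k-1)=N$, a yes-instance. Conversely, suppose $|S|=s\le k$ and $\phi_H(S)\le N$; then $d-2|E(H[S])|/s\le d-(k-1)$ gives $2|E(H[S])|\ge (k-1)s$, while $|E(H[S])|\le\binom{s}{2}$ forces $s(s-1)\ge (k-1)s$, i.e.\ $s\ge k$, and with $s\le k$ this yields $s=k$ and $|E(H[S])|=\binom{k}{2}$, so $S$ is a $k$-clique. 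Since the SSE parameter equals the clique parameter $k$, the construction is polynomial, and all graphs are unweighted, this is a valid parameterized reduction establishing W[1]-hardness of unweighted $k$SSE for the parameter $k$. The one routine point to check is the degenerate regime $d<k-1$ (where $N<0$): there $H$ can have no $k$-clique, while $\phi_H(S)\ge 0>N$ for every $S$, so both sides are trivially no-instances, consistent with the equivalence.
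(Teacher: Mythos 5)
Your reduction is essentially the paper's own proof: both reduce from $k$-\textsc{Clique} on regular graphs with the threshold $N=d-k+1$, using the fact that on a $d$-regular graph a set of size $s\le k$ has expansion at least $d-s+1$, with equality forcing $s=k$ and $S$ a clique. The only difference is that you prove W[1]-hardness of \textsc{Clique} on regular graphs yourself via a (correct) degree-padding gadget, whereas the paper simply cites this fact \cite{cai2008parameterized, marx2008parameterized}, so your argument is a self-contained variant of the same approach rather than a different one.
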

\begin{proof}
We give a parametrized reduction from $k$-clique on regular graphs which is known to be W[1]-hard concerning $k$ \cite{cai2008parameterized, marx2008parameterized}. Let $G$ be a $d$-regular graph. First, if $G$ has a $k$-clique, say $S$, then we have 
\[\phi(S)=\frac{dk-k(k-1)}{k}=d-k+1, \]
and thus $\psi_k(G)\leq d-k+1$.
Now, suppose that $\psi_k(G)\leq d-k+1$. Then, there is a subset $S\subseteq V$, where $|S|\leq k$ and $\phi(S)\leq d-k+1$. Let $s$ be the size of $S$. Now, we have
\[d-k+1\geq \phi(S)=\frac{|E(S,\overline{S})|}{|S|}\geq \frac{ds-s(s-1)}{s}=d-s+1. \]
Therefore, $s\geq k$. This implies that $|S|=k$. Also, if $S$ is not a clique, then the last inequality is strict which is a contradiction. Hence, $S$ is a $k$-clique. This completes the proof.
\end{proof}

In the sequel, we will show that $k$SSE is FPT with respect to $(k,\Delta)$ combined, while it is W[1]-hard for each of the parameters separately. Our algorithm relies on   
random separation technique \cite{cai2006random}.
\begin{theorem}\label{thm:randomsep}
The problem \textsc{$k$SSE} is fixed-parameter tractable when parameterized by $k$ and $d$, where $d$ is the maximum degree of the input graph.
\end{theorem}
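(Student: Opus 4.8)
The goal is to find a subset $S$ with $|S|\leq k$ and $\phi_G(S)\leq N$. Since the graph has maximum degree $d$ and $|S|\leq k$, the \emph{boundary} $E(S,\overline{S})$ together with the internal edges of $S$ is confined to a set of at most $k$ vertices each having at most $d$ neighbors; hence the closed neighborhood $N[S]$ has size at most $k(d+1)$. The idea of random separation is to colour each vertex of $V$ independently green or red, where green vertices are the ``candidate'' vertices we keep and red vertices are discarded. A colouring is called \emph{successful} for a target solution $S$ if every vertex of $S$ is coloured green and every vertex of $N(S)\setminus S$ is coloured red; in that case $S$ becomes a \emph{union of connected components} of the green-coloured induced subgraph $G[\text{green}]$, which isolates it from the rest of the graph.

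\textbf{First I would make the reduction to connected pieces precise.} After a successful colouring, $S$ is a disjoint union of green connected components, each of which has at most $k$ vertices and lies inside the bounded-degree graph. The key structural point — and this is where one must depart from the plain additive framework of \cite{cai2006random}, as the excerpt warns, because $\phi_G$ is \emph{not} additive — is that while the edge-expansion value of $S$ itself is a single ratio $w(E(S,\overline S))/|S|$ and hence nonlinear, the \emph{numerator} $w(E(S,\overline S))$ and the \emph{denominator} $|S|$ are each additive over the green components that compose $S$. So I would enumerate all green connected components of size at most $k$ (there are $O(n)$ of them after a given colouring, since each bounded-degree component of size $\leq k$ can be listed in time depending only on $k$ and $d$ by a bounded branching search from each vertex), record for each the pair $(\text{numerator contribution},\ \text{size})$, and then combine a subcollection of these components whose total size is at most $k$ so as to minimize the resulting ratio. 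Minimizing $\bigl(\sum_j a_j\bigr)/\bigl(\sum_j s_j\bigr)$ subject to $\sum_j s_j\leq k$ over a family of component-pairs $(a_j,s_j)$ can be done by a simple dynamic program over the total size $1\leq s\leq k$, keeping for each size the least achievable numerator; this costs time polynomial in $n$ and in $k$.

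\textbf{Next comes the probabilistic bound.} For a fixed optimal solution $S$ with $|S|\leq k$, a random green/red colouring is successful with probability at least $2^{-|N[S]|}\geq 2^{-k(d+1)}$, since we need $S$ green and its (at most $kd$) outside neighbours red. Thus $O\!\bigl(2^{k(d+1)}\bigr)$ independent random colourings suffice to hit a successful one with high probability, and each colouring is processed in polynomial time by the component-enumeration-and-DP step above. This yields a randomized algorithm running in time $f(k,d)\cdot n^{O(1)}$, establishing fixed-parameter tractability with respect to $(k,d)$ combined. \textbf{The main obstacle} is precisely the nonlinearity of $\phi_G$ flagged in the text: one cannot simply add expansion values across far-apart components, so the argument must be reorganized to track the additive numerator and denominator separately and reconstruct the optimal ratio by the knapsack-style dynamic program, rather than invoking the additive optimization oracle of \cite{cai2006random} as a black box.

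\textbf{Finally, derandomization} follows the standard route: replace the family of random colourings by an explicit $(n,t)$-universal set with $t=k(d+1)$, which has size $2^{t}t^{O(1)}\log n$ and is constructible in time $2^{t}t^{O(1)}n\log n$. A universal set guarantees that for every set of at most $t=k(d+1)$ positions and every desired pattern on them, some colouring in the family realizes it; applying this to the positions $N[S]$ and the pattern ``green on $S$, red on $N(S)\setminus S$'' guarantees a successful colouring for the (unknown) optimum, so the enumeration succeeds deterministically. This gives an efficient deterministic FPT algorithm and completes the proof.
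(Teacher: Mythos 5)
Your proposal is correct and follows essentially the same route as the paper's proof: random green/red separation with success probability at least $2^{-(d+1)k}$, reduction to a knapsack-style dynamic program over green connected components that tracks the additive numerator $w(E(C_i,\overline{C_i}))$ and size $|C_i|$ separately, and derandomization via $(n,(d+1)k)$-universal sets of Naor--Schulman--Srinivasan. The only (inconsequential) slip is your claimed universal-set size $2^t t^{O(1)}\log n$; the construction actually gives $2^t t^{O(\log t)}\log n$, which still yields the FPT bound.
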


\begin{proof}
Let $G=(V,E)$ be a graph with maximum degree $d$ endowed with a weight function $w:E\to \mathbb{R}^+$. We randomly color each vertex of $G$ by either green or red (with equal probabilities) to see a random partition $(V_{g}, V_{r})$ of $V$. Let $\hat{S}\subset V$ be a solution to $k$SSE. A partition $(V_g,V_r)$ is called \textit{good}  for $\hat{S}$ if $\hat{S}\subset V_g$ and  $N_{G}(\hat{S})\setminus \hat{S}\subset V_r$, i.e. all vertices in $\hat{S}$ are green and also, all vertices in $V\setminus \hat{S}$ with a neighbor in $\hat{S}$ are red. We can see that the probability that a random partition is a good partition for $\hat{S}$ is at least $2^{-(d+1)k}$. Note that in a good partition, $\hat{S}$ is the union of some green connected components since each green connected component must either be contained in $\hat{S}$ or completely excluded from $\hat{S}$.
 
Now, fix a good partition $(V_g,V_r)$ and let $C_1,\ldots, C_t$ be the connected components of $G$ induced on $V_g$. Also, for each $i\in \{1,\ldots ,t\}$, let $n_i=|C_i|$ and $m_i= w(C_i,V\setminus C_i)$. Now, finding a solution $S\subseteq V_g $ with $|S|\leq k$ and $\phi_G(S)\leq N$, is reduced to the problem of finding a subset $I\subseteq \{1,\ldots, t \}$, such that, 
$ \frac{\sum_{i\in I} m_{i}}{\sum_{i\in I} n_{i}} \leq N$, and $\sum_{i\in I}n_i\leq k.$
This problem can be solved in $O(kn)$ time using the standard dynamic programming algorithm for 0-1 knapsack problem \cite{kleinberg2006algorithm}. Moreover, the computation of the green components and the numbers $n_{i}$ and $m_{i}$ can be done in $O(dn)$ time. Thus, we can find a solution $\hat{S}$ in $O((d+k)n)$ time with probability at least $2^{-(d+1)k}$.

For derandomization, we use the notion of \textit{universal set}. Let $A \subseteq \{0, 1\}^{n}$ contain binary strings of length $n$. We say $A$ is an $(n, k)$-\textit{universal} if for every array $I = (i_{1}, i_{2}, \dots, i_{k})$ of $k$ string positions, the projection $A|_{I} = \{(a_{i_{1}}, a_{i_{2}}, \dots, a_{i_{k}}) \hspace{1mm} | \hspace{1mm} a = (a_{1}, a_{2}, \dots, a_{n}) \in A \} $, 
contains all $2^{k}$ possible binary strings of length $k$.

Naor et al. \cite{naor1995splitters} present a near-optimal deterministic construction of a $(n,k)$-universal set of size $2^{k}k^{O(\log k)}\log n$. Now, we choose a $(n,(d+1)k)$-universal set of this size as a collection of partitions of $V$. For each solution $\hat{S}$ to $k$SSE, we have $|\hat{S}|\leq k$ and $|N(\hat{S})|\leq dk$. Therefore, there is a partition in the universal set which is a good partition for $\hat{S}$. So, running the above algorithm for each of the partitions in the universal set can find the solution. 

The used universal set contains at most $2^{(d+1)k}(dk+k)^{O(\log (dk+k))}\log n$ partitions. So the running time of our deterministic algorithm is at most $ O(f(k,d)n \log n)$, where $f(k,d) = 2^{(d+1)k}(dk+k)^{O(\log (dk+k))}(d+k)$.
\end{proof}

\section{Conclusion}
We have presented exact and parameterized algorithms as well as some hardness results for the $k$-\textsc{Sparsest Cut} and the $k$-\textsc{Small-Set Expansion} problems. Our algorithms deal with many parameters such as treewidth, vertex cover number, maximum degree, and degeneracy of the input graph. It would be challenging to improve the running time of the presented algorithms.
In addition, investigating parameterized complexity of unweighted $k$SC with respect to other structural parameters such as clique-width, modular-width and neighborhood diversity can be the subject of future work and can shed more light on the ``price of generality'' line of research.

	\bibliographystyle{abbrv}
    \bibliography{reference}
\end{document}